\newcommand{\old}[1]{\st{}}
\newtheorem{lem}{Lemma}
\DeclareMathOperator*{\argmax}{argmax}
\begin{document}
\title{Small Profits and Quick Returns: An Incentive Mechanism Design for IoT-based Crowdsourcing under Continuous Platform Competition}

\author{Duin Baek,~\IEEEmembership{Student Member,~IEEE,}
        Jing Chen,
        and~Bong~Jun~Choi,~\IEEEmembership{Member,~IEEE}
\IEEEcompsocitemizethanks{\IEEEcompsocthanksitem D. Baek, J. Chen and B.J. Choi are with the Department of Computer Science, Stony Brook University, New York,
NY, 30332.\protect\\
Corresponding Author: B.J. Choi, E-mail: BongJun.Choi@stonybrook.edu}
\thanks{Manuscript received December 21}}


\IEEEtitleabstractindextext{%
\begin{abstract}
Crowdsourcing can be applied to the Internet-of-Things (IoT) systems to provide more scalable and efficient services to support various tasks. As the driving force of crowdsourcing is the interaction among participants, various incentive mechanisms have been proposed to attract and retain a sufficient number of participants to provide a sustainable crowdsourcing service. However, there exist some gaps between the modeled entities or markets in the existing works and those in reality: 1) \textit{dichotomous} task valuation and workers' punctuality, and 2) crowdsourcing service market \textit{monopolized} by a platform. To bridge those gaps of such impractical assumption, we model workers' heterogeneous punctuality behavior and task depreciation over time. Based on those models, we propose an Expected Social Welfare Maximizing (ESWM) mechanism that aims to maximize the expected social welfare by attracting and retaining more participants in the long-term, i.e., multiple rounds of crowdsourcing. In the evaluation, we modeled the continuous competition between the ESWM and one of the existing works in both short-term and long-term scenarios. Simulation results show that the ESWM mechanism achieves higher expected social welfare and platform utility than the benchmark by attracting and retaining more participants. Moreover, we prove that the ESWM mechanism achieves the desirable economic properties: individual rationality, budget-balance, computational efficiency, and truthfulness.
\end{abstract}

\begin{IEEEkeywords}
Internet of Things, Crowdsourcing, Incentive Mechanism, Distributed Computing
\end{IEEEkeywords}}

\maketitle

\IEEEdisplaynontitleabstractindextext
\IEEEpeerreviewmaketitle

\IEEEraisesectionheading{\section{Introduction}}
\label{sec:introduction}
\IEEEPARstart{T}{he} Internet-of-Thing (IoT) is one of the fastest growing research and industry areas involving vast amount and variety of computing devices, which creates the availability of massive data generation and exchange among the devices. Annually, the IoT devices grow in both quantity and quality (capability). Quantitatively, the number of IoT devices installed or deployed worldwide is expected to reach 75.44 billion in 2025 according to IHS Markit survey conducted in 2016 \cite{iot}. Qualitatively, smartphones, one of the most representative types of IoT devices, are now equipped with powerful CPU, GPU, RAM, and storage, e.g., Octa-core (4x2.8 GHz Kryo 385 Gold \& 4x1.7 GHz Kryo 385 Silver), 4 GB RAM in Samsung Galaxy S9, which are comparable to those of many laptops and desktops several years ago, as well as various sensors such as accelerometer, gyroscope, iris, and fingerprint scanner. Utilizing such powerful and numerously distributed IoT devices, we can provide a service where people can request IoT devices to collect massive amount of sensor data or to solve computationally complex problem inexpensively using their idle computing power in a manner of \textit{crowdsourcing} in which tasks that were traditionally completed by appointed agents are now outsourced to an undefined large group of crowd. Using the concept of crowdsourcing, many applications have been proposed: LiveCompare \cite{livecompare} for grocery price comparison, GreenGPS \cite{greengps} to allow drivers to find the most fuel-efficient route, LiFS \cite{lifs} for indoor localization, and so on \cite{noisetube, healthmap, bus}.   

However, such IoT-based crowdsourcing services are viable only when IoT device users actively participate in the crowdsourcing system (not only to use the services but also to provide their resources). Thus, we require an incentive mechanism that properly rewards resource providers for their resources, which will induce them to participate in the crowdsourcing system. In the literature, various incentive mechanisms for crowdsourcing have been proposed to encourage user collaboration \cite{Yang, quality, liu, Lee, Lin, budget, team1, truthful, tullock, distributed1}. 
However, despite their well-defined system models, there exist some gaps between the entities or markets modeled in the existing works and those in reality. In the crowdsourcing system, there are three kinds of entities: requesters, workers, and platforms. A \textit{requester} asks a platform to assign his/her task to a worker, while a \textit{worker} aims to complete the task to receive a reward. In the IoT-based crowdsourcing system, IoT device users can be both requesters and workers. A \textit{platform} plays as an auctioneer to mediate between requesters and workers.  

In terms of requesters, the existing works have a dichotomous task valuation model. That is, a task valuation immediately collapses to zero, like a step function, after its deadline. However, such dichotomous model is not general enough to include the task \textit{depreciation} case where the value of a task is fully preserved until its deadline and depreciates in proportion to the time past after the deadline \cite{depreciation}. Depending on the level of task depreciation over time, it may be beneficial to accept some late tasks results rather than requesting them again. Accordingly, such task depreciation can also affect the payment policy of incentive mechanisms. In the existing works that only considered fixed and binary task valuation model, workers are rewarded based on the fixed payment policy. However, to encompass general cases where tasks depreciate after their deadline and workers can still submit their task result, the reward for workers should be decided based on the task valuation achieved at the moment workers submit their assigned task result.      

Stemming from the dichotomous task valuation model, workers in the literature are modeled to show only binary behavior in terms of punctuality. In other words, workers either submit their assigned task results in time or not submit. However, considering the aforementioned general cases where tasks depreciate over time and late task results are accepted, workers' punctuality should be considered in diverse aspects. In practice, workers can have heterogeneous level of punctuality. That is, some workers may be always punctual, while others may be usually punctual but sometimes late. Even among the late workers, some may submit their task results slightly after the deadline while others submit far after the deadline. Thus, such heterogeneous punctuality should be considered to better capture the general behavior of users in the crowdsourcing market.  

Moreover, the limitation of the crowdsourcing market model in the existing works is that a platform explicitly or implicitly \textit{monopolizes} the crowdsourcing market alone. In the literature, mechanisms were separately evaluated under the same condition such as the number of participants. In addition, the crowdsourcing markets in the literature are \textit{static}, which means that they do not involve or consider movements of participants over time. However, in reality, several platforms will compete with each other in a crowdsourcing market to attract more participants. Accordingly, each participant will decide which crowdsourcing platform to join depending on the payment policy of each platform. This results in dynamic movements of participants in the market over time.

To bridge those gaps in the literature, this paper makes the following main contributions:
\begin{enumerate}
	\item We design a heterogeneous time-varying task valuation model that encompasses task depreciation.
	\item We design a behavior model of workers that captures their stochastic punctuality in completing their assigned tasks.
	\item We design an incentive mechanism that aims to maximize the expected social welfare  in the long term by attracting and retaining more participants.
	\item We model the dynamic competition over time between crowdsourcing service platforms which involves dynamic movements of participants between the platforms.
\end{enumerate}

The rest of this paper is organized as follows. In Section \ref{part:1}, we provide a literature review of the existing mechanisms for crowdsourcing. In Section \ref{part:2}, we present our system models on requesters, workers, and a platform. Based on the models, we formulate the expected social welfare maximizing problem in Section \ref{part:3}. In Section \ref{part:4}, we propose our Expected Social Welfare Maximizing (ESWM) mechanism that selects appropriate requester-worker pairs considering heterogeneity in task depreciation speed and workers' punctuality. In Section \ref{part:5}, we evaluate the performance of our ESWM mechanism. In Section \ref{part:6}, we conclude our paper.

\section{Related Work}
\label{part:1}
In this section, we review the state-of-the-art research works on incentive mechanisms for crowdsourcing. Yang et al. \cite{Yang} presented two general models of incentive mechanisms to motivate mobile users' participation: platform-centric model and user-centric model. D. Peng et al. \cite{quality} proposed a quality-based incentive mechanism for crowdsensing by rewarding participants proportionally to their contribution. Y. Wen et al. \cite{quality1} presented a quality-driven auction-based incentive mechanism for a Wi-Fi fingerprint-based indoor localization system. In this direction, C. Liu et al. \cite{liu} also proposed a Quality of Information (QoI)-aware incentive mechanism to maximize the quality of information. 

In the long-term view, Lee and Hoh \cite{Lee} proposed a mechanism, called RADP-VPC, that provides long-term incentives to participants to maintain participants and promote dropped ones to participate again. Similarly, L. Gao et al. \cite{Lin} proposed a mechanism to provide long-term incentives to participants to achieve the maximum total sensing value and the minimum total sensing cost. 

Focusing on dynamic crowdsensing where participants arrive in an online manner, Zhao et al. \cite{budget} presented two online incentive mechanisms using a multiple-stage sampling-accepting process. Similarly, Y. Wei et al. \cite{truthful} proposed an online incentive mechanism to maximize the number of matched pairs of participants and crowdsourcing service users when participants and service users are dynamically changing.

J. Sun et al. \cite{behavior} proposed a behavior-based incentive mechanism for crowdsensing with budget constraint. This work aims to achieve both the extensive user participation and high quality sensing data submission, based on users’ behavior abilities. S. Ji et al. \cite{tremble} presented an incentive mechanism for mobile phones with uncertain sensing time. In order to address the sensing time uncertainty problem, this work modeled the problem as a perturbed Stackelberg game where mobile phone users sensing task times may be different from their original plans. T. Luo et al. \cite{allpay} proposed an incentive mechanism for heterogeneous crowdsourcing using all-pay contests where workers have not only different type information (abilities and costs), but also the different beliefs (probabilistic knowledge) about their respective type information. In this work, the belief is modeled as a probability distribution. 

X. Zhang et al. \cite{team1} proposed truthful incentive mechanisms for the case where participants' cooperation may be required to finish a job. Using Tullock contest, T. Luo et al. \cite{tullock} designed an incentive mechanism to maximize the crowdsourcing service user’s profit. L. Duan et al. \cite{distributed1} proposed an incentive mechanism for smartphone collaboration in distributed computing using contract theory under two different scenarios having complete and incomplete information of participants.

However, the heterogeneity of task depreciation over time has not been addressed in the literature. Task depreciation models over time in the existing works are \textit{dichotomous} as the value of tasks immediately drop to zero after the given deadline. In other words, there can be only two kinds of task valuation, either full or null. Such dichotomous model overlooks the cases where task valuation remains valid even after the deadline, though depreciating in proportion to the amount of time past the deadline, which can be observed in practice and the literature \cite{depreciation}.

Moreover, the heterogeneity of workers' \textit{punctuality} levels has not been addressed. In real crowdsourcing systems, each worker's priority on its assigned task can vary, e.g., some prioritize on the assigned tasks while others prioritize on their own tasks. In the existing works, workers are assumed to have dichotomous behaviors. Moreover, many works implicitly or explicitly assumed that all selected workers meet their deadlines. However, coupled with the depreciation of task value after the deadline, workers' heterogeneous punctuality levels can make difference in the realized task valuation at the task result submission time. For instance, a provider with a relatively higher punctuality level is expected to achieve a higher task valuation than the one with a lower punctuality level. Besides, the partial task valuation achieved after the deadline can vary depending on the punctuality level.


\section{System Model}
\label{part:2}
\subsection{System Overview}
In this section, we propose a system model considering task depreciation over time and workers' stochastic punctuality. A crowdsourcing system consists of a platform (an auctioneer), a set of $M$ requesters (buyers) $R=\{r_1,r_2,r_3,...,r_M\}$ identified by $j$, and a set of $N$ workers (sellers) $W=\{w_1,w_2,w_3,...,w_N\}$ identified by $i$. Then, the crowdsourcing system is extended to two platforms that compete with each other. 
As mentioned earlier, IoT devices or users can be both requesters and workers in the IoT-based crowdsourcing system. For simplicity, we assume that a requester $r_j$ can submit only one task $\Gamma_j$ to the platform and a worker $w_i$ can take only one task. As in the real world systems, a platform is assumed to have a limited capacity to handle $K$ number of task requests. The process of crowdsourcing in our system is in a form of double auction where requesters (workers) compete with each other to be selected as the winning requesters (workers) as detailed below:
\begin{enumerate}
	\item A set of requesters $R$ submit to the platform their own type information $\theta_j^r$ which contains the maximum task valuation $v^{max}_j$, deadline $t^d_j$, and etc. Similarly, a set of workers $W$ submit to the platform their own type information $\theta_i^p$ which contains their cost value $c_i$. The platform analyzes and adds the level of punctuality $\lambda_i$ to the corresponding $\theta_i^p$.  
	\item The platform selects a subset of $R$ as the winning requesters $R_s$ and a subset of $W$ as the winning workers $W_s$, and calculates temporary fees $q_j \leq v^{max}_j$ for $r_j \in R_s$ and temporary payment $p_i \geq c_i$ for $w_i \in W_s$ assuming that all $w_i \in W_s$ will be perfectly punctual. Note that the temporary $q_j$ and the payment $p_i$ will be updated based on the task valuation at the time of actual task result submission.
	\item For $R_s$ and $W_s$, the platform matches each $r_j \in R_s$ to each $w_i\in W_s$. The matching algorithm is detailed in Section \ref{part:4-matching}.
	\item When each $w_i \in R_s$ completes and submits the requested task $\Gamma_j$ at $t_i^{sub}$, the platform decides definitive fee $q'_j$ and payment $p'_i$, based on the achieved task valuation at $t_i^{sub}$. Note that $q'_j$ and payment $p'_i$ are the fee that $r_j \in R_s$ will be charged and the payment $w_i \in W_s$ will receive, respectively.        
\end{enumerate}

\subsection{Requesters}
\begin{figure}
	\centering
	\includegraphics[width=0.3\textwidth]{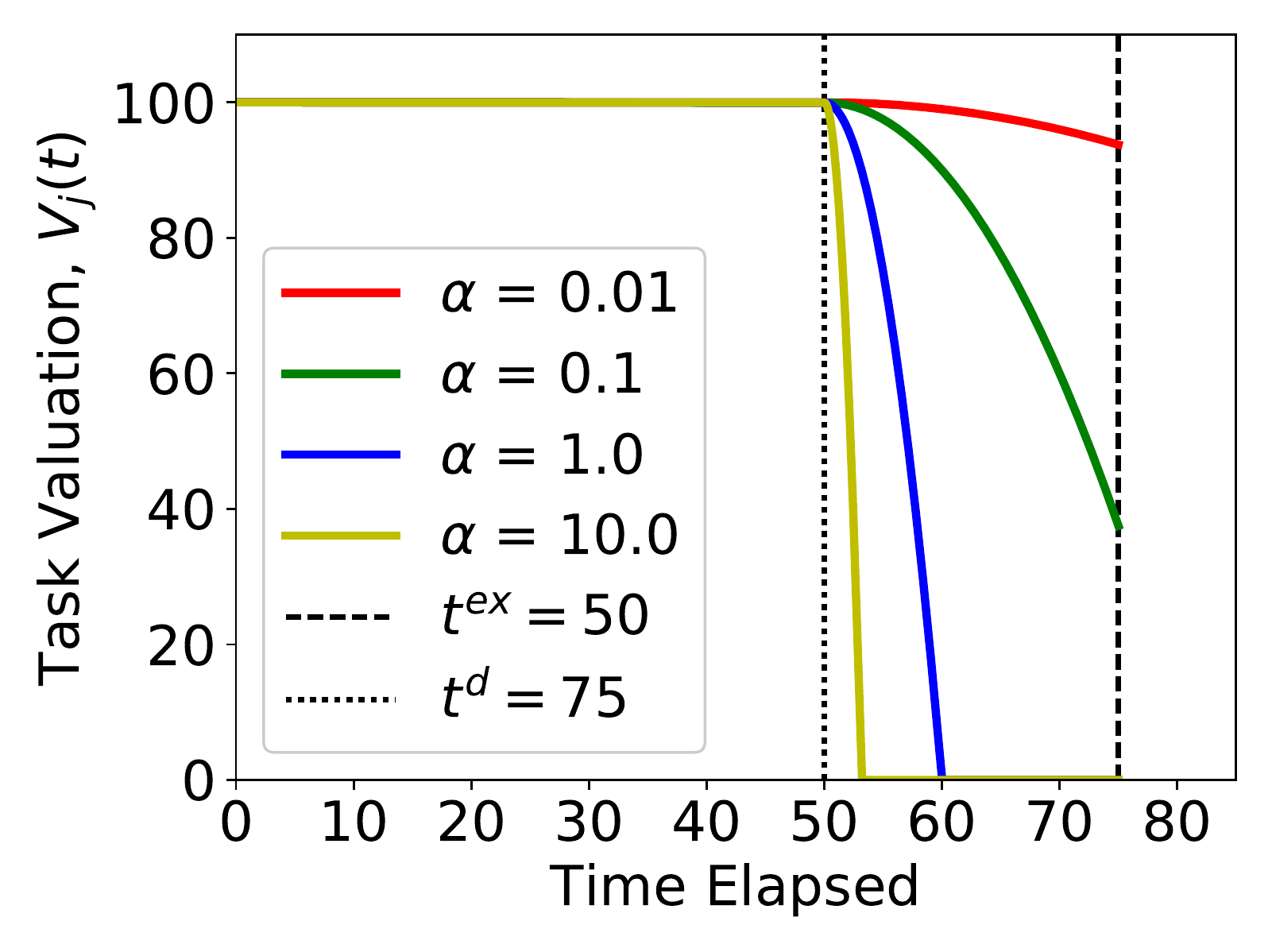}
	\caption{$v_j(t)$ with different $\alpha$ values}
	\label{fig:1}
\end{figure}

In crowdsourcing, requesters are buyers in a double auction who compete to be selected as the winners so that they can outsource their tasks to workers. For those selected $R_s$, their outsourced tasks will be completed by the winning workers. In the initial stage of the auction, each requester $r_j$ submits to the platform its own type information $\theta_j^r=(\,\Gamma_j, t_j^d, t_j^{ex}, v^{max}_j, \alpha_j )\,$. Here, $\Gamma_j$, $t_j^d$, and $t_j^{ex}$, respectively, denote a task outsourced by requester $r_j$, the deadline when the valuation of the outsourced task starts to depreciate, and the expiry time when the valuation of the task becomes null. The maximum valuation of $\Gamma_j$ is denoted as $v_j^{max}$. As each requester $r_j$ is rationally selfish, he/she will use crowdsourcing service only if the value of $\Gamma_j$ is higher than the corresponding fee $q_j$. To this extent, requesters are modeled in the same way as the existing works where tasks are heterogeneous in various aspects such as task size, bid, and etc. Such heterogeneity has been well formulated into mathematical problems and various solutions for them have been proposed. 

In addition, we consider the heterogeneity of task depreciation over time which has not been addressed in the literature as mentioned in Section \ref{part:1}. In this work, to reflect the diversity of task depreciation in crowdsourcing, we introduce $\alpha_j$ to denote the speed of task depreciation after the deadline. In practice, the speed or level of task depreciation can vary, i.e., some tasks depreciate to null right after their deadline, while others do gradually. Therefore, to reflect such heterogeneity of task depreciation, we introduce a new function in terms of elapsed time $t$ which reflects the speed of task depreciation defined as  
\begin{equation}
\label{eq:5}
v_j(t) = 
\begin{cases}
v_j^{max}, & \quad \text{if } 0 \leq t \leq t_j^d \\
\max\{ 0 , v_j^{max}-\alpha_j(t-t_j^d)^2\},  & \quad \text{otherwise}.\\
\end{cases}
\end{equation}

We use a quadratic function of elapsed time $t$ as a form of \textit{accelerated depreciation} methods where tasks are more profitable or have a higher utility during their early time period. Until the deadline $t_j^d$ of $r_j$, the task valuation $v_j(t)$ remains constant at maximum $v_j^{max}$. However, $v_j(t)$ starts to depreciate after $t_j^d$, depending on its $\alpha_j$. Tasks with small $\alpha_j$ will depreciate slowly and we may achieve the partial task valuation when completed even after $t_j^d$. In contrast, tasks with large $\alpha_j$ will depreciate too fast to achieve any valuation even though it is completed shortly after $t_j^d$. In Fig. \ref{fig:1}, we show how $v_j(t)$ changes with different $\alpha_j$ values.
Based on $v_j(t)$ and $q'_j$, we define the utility of requester $r_j$ as follows
\begin{equation}
\label{eq:1}
u_{j}^{r}= 
\begin{cases}
v_j(t_*^{sub}) - q'_j, & \quad \text{if } r_j \in R_s  \\
0 & \quad \text{otherwise},\\
\end{cases}
\end{equation}
where $t_*^{sub}$ denotes the time $\Gamma_j$ is completed. 

\subsection{Workers}
\begin{figure}
	\centering
	\includegraphics[width=0.3\textwidth]{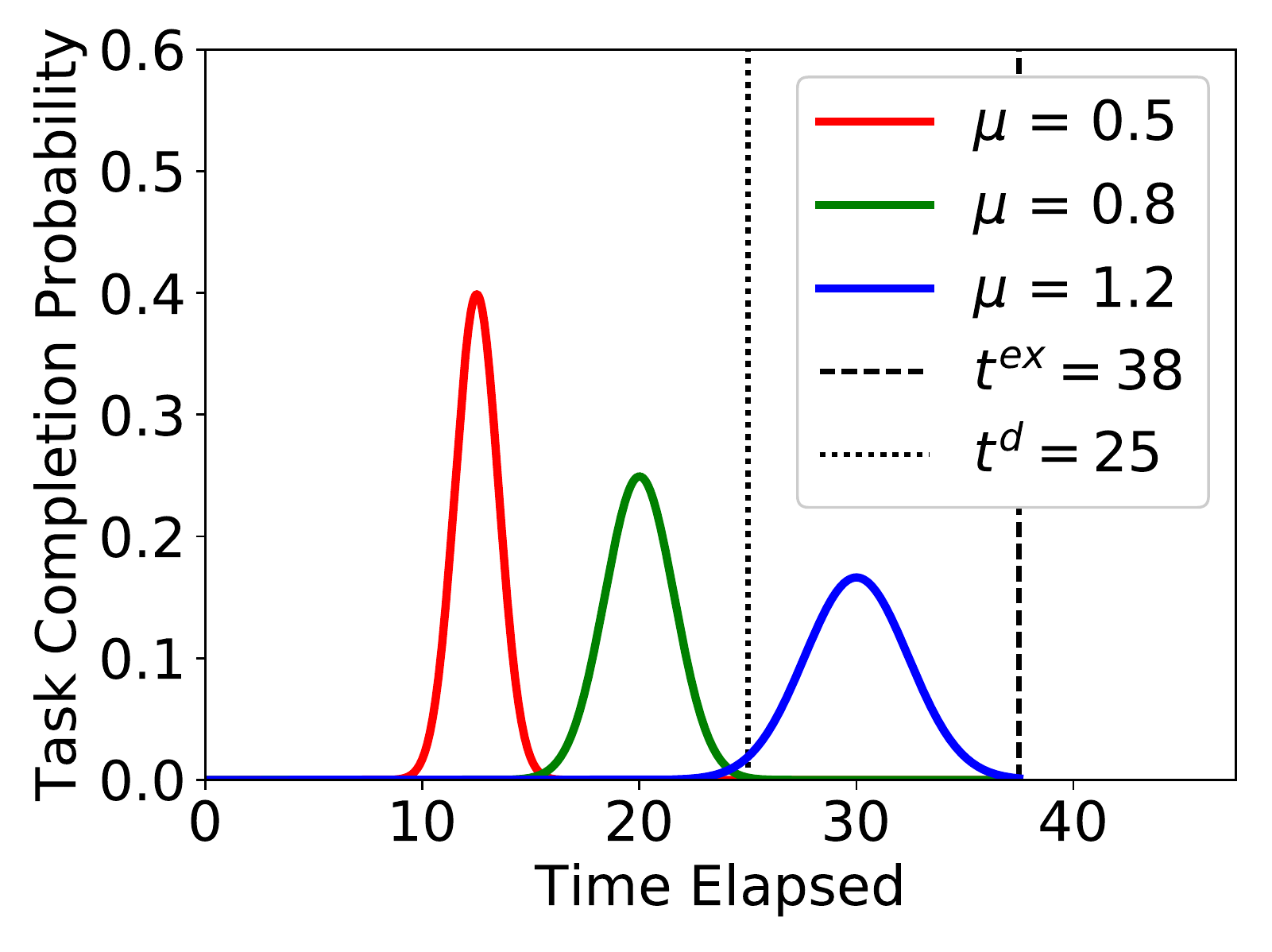}
	\caption{$f_i(t;\mu_i,\sigma_i, t_*^d, t_*^{ex})$ with different $\mu_i$ values}
	\label{fig:2}
\end{figure}
In crowdsourcing, workers are sellers in a double auction who compete to be selected as the winners. Each winning worker $w_i \in W_s$ is required to complete their requested tasks. In the process of executing assigned tasks, each worker $w_i$ will have task execution cost, $c_{i}\in \mathbb{R}_{+}$. To compensate such incurred cost, $w_i$ submits to the platform its own type information $\theta_i^p=(\,c_i)\,$ that represents the minimum ask value that $w_i$ wants to receive as the reward for completing a requested task. Because $w_i$ is rationally selfish, $w_i$ will decide to work on the requested task only if it is paid with $p'_i \geq c_i$. To this extent, workers are modeled in the same way as the existing works where the costs of workers are heterogeneous. 

On top of the heterogeneity in costs, we consider workers' heterogeneous behavior in meeting the deadline which has not been addressed in the existing works. To reflect such heterogeneous punctuality levels, we model $w_i$'s stochastic behavior for its assigned task $\Gamma_*$ as a conditional probability function parameterized by $t_*^d$ and $t_*^{ex}$, using the following truncated normal distribution 
\begin{equation}
\label{eq:7}
f_i(t;\mu_i,\sigma_i, t_*^d, t_*^{ex})= 
\begin{cases}
\frac{\frac{1}{\sigma_i}\phi(\frac{t-\mu_i t_*^d}{\sigma_i}) }{\Phi(\frac{t_*^{ex}-\mu_i t_*^d}{\sigma_i}) - \Phi(\frac{-\mu_i}{\sigma_i})}, & 0 \leq t \leq t_*^{ex}  \\
0, & \text{otherwise},\\
\end{cases}
\end{equation}
\begin{equation}
\label{eq:8}
\text{where } \phi(\zeta)=\frac{1}{\sqrt{2\pi}}\exp(-\frac{1}{2}\zeta^2),
\end{equation}
where (\ref{eq:8}) and $\Phi(\cdot)$ denote the probability density function of the standard normal distribution and the cumulative density function, respectively. Note that $\mu_i t^d_*$ is the mean of the truncated normal distribution, which can be derived from users' log data. Note that the existing works \cite{behavior1,behavior2,behavior} demonstrate that we can analyze user behaviors, using log analysis. In this work, we assume $\sigma=2\mu$ for simplicity. Such linear relationship between $\mu$ and $\sigma$ is assumed to more clearly differentiate workers' behaviors. For a newly participating $w_i$, the platform sets its $\mu_i$ to 1.  

Based on the behavior model, workers stochastically submit their requested task results to the platform. In other words, there is a high probability of $w_i$ submitting the task results around $\mu_it_*^d$. To numerically represent such workers' stochastic behavior, we use a punctuality coefficient $\lambda_i$, which is equal to $1/\mu_i$. The platform appends $\lambda_i$ to each $\theta_i^p$ resulting in $\theta_i^p = (a_i, \lambda_i)$. A user's behaviors with different $\mu$ values for the same task are shown in Fig. \ref{fig:2}. Based on the payment $p'_i$ and the incurred cost $c_i$, we define the utility of worker $w_i$ as follows
\begin{equation}
\label{eq:2}
u_{i}^{p} =
\begin{cases}
p'_{i}-c_{i}, & \quad \text{if } w_i \in W_s\\
0  & \quad \text{otherwise}.\\
\end{cases}
\end{equation}

\subsection{Platform}
In crowdsourcing systems, a platform acts as an auctioneer to select the winners in both requesters (buyers) and providers (sellers), and match each winning requester to a winning worker. Reflecting servers have finite capacity, we assume the platform has a limited capacity $K$ to handle outsourcing requests.

For a platform, the sum of fees from $R_s$ and the sum of payments to $W_s$ are its revenue and expenditure, respectively. Given $R_s$ and $W_s$, we define the utility of a platform as follows
\begin{equation}
\label{eq:3}
u_{0}= \sum_{r_j \in R_s} q'_j - \sum_{w_i\in W_s} p'_i.
\end{equation}

\subsection{Expected Social Welfare}
In the existing works, to evaluate the performance of crowdsourcing services, the system-wise social welfare is calculated as follows
\begin{equation}
\label{eq:20}
\sum_{r_j \in R_s}v_j^{max} - \sum_{w_i \in W_s} c_i.
\end{equation} 

However, when tasks depreciate after their deadline at various speed and workers' behaviors are stochastic, the expected social welfare can better evaluate the performance of crowdsourcing services than the simple calculation in (\ref{eq:20}). Thus, to reflect workers' stochastic behaviors and potential task depreciation in the performance evaluation, we define the system-wise expected social welfare (ESW) as 
\begin{equation}
ESW = \sum_{r_j \in R_s}\mathbf{E}_i(v_j(t)) - \sum_{w_i \in W_s}c_i,
\label{eq:4}
\end{equation}
where $\mathbf{E}_i(v_j(t))$ is the expected valuation of $\Gamma_j$ when assigned to $w_i$. Note that in this work, we assume that the cost $c_i$ is constant whenever $w_i$ completes its assigned task. We define $\mathbf{E}_i(v_j(t))$ as   
\begin{multline}
\label{eq:10}
\mathbf{E}_i(v_j(t)) = \int_{0}^{t^{ex}_j}v_j(t) f(t;\mu_i,\sigma_i, t_j^d, t_j^{ex})dt = \int_{0}^{t^d_j}v_j^{max}  \\ 
f(t;\mu_i,\sigma_i, t_j^d, t_j^{ex}) dt + 
\int_{t^d_j}^{t^{ex}_j}v_j(t) f(t;\mu_i,\sigma_i, t_j^d, t_j^{ex}) dt,
\end{multline}
where $v_j(t)$ remains constant until $t_j^d$. Here, $\mathbf{E}_i(v_j(t))$ consists of two parts: 1) pre-deadline, and 2) post-deadline. The pre-deadline part and the post-deadline part represent the expected task valuation until the deadline and that after the deadline, respectively. Note that (\ref{eq:20}) is a special case of (\ref{eq:4}) where the integral of $f_i(t;\mu_i,\sigma_i, t_j^d, t_j^{ex})$ from 0 until $t_j^d$ is 1 because $\mathbf{E}_i(v_j(t))$ in this case is calculated as  
\begin{multline}
\label{eq:9}
\mathbf{E}_i(v_j(t))=\int_{0}^{t^{ex}_j}v_j(t)f_i(t;\mu_i,\sigma_i, t_j^d, t_j^{ex})dt =\\\int_{0}^{t^d_j} v_j^{max} f_i(t;\mu_i,\sigma_i, t_j^d, t_j^{ex}) dt= v_j^{max}.
\end{multline}
Note that when the assumption of perfect punctuality does not hold, the expected valuation of (\ref{eq:10}) can be less than $v_j^{max}$, i.e., $\mathbf{E}_i(v_j(t)) \leq v_j^{max}$, due to the task results submitted after the deadline. 

\subsection{Desirable Economic Properties}
In this work, we aim to design an incentive mechanisms for crowdsourcing that satisfies the following four desirable economic properties: 1) individual rationality, 2) budget-balance, 3) computational efficiency, and 4) truthfulness. Each is described as below.
\subsubsection{Individual Rationality}
An incentive mechanism is individually rational if both requesters (buyers) and workers (sellers) have non-negative utility when their true valuation and cost are reported. 
\subsubsection{Budget-balance}
An incentive mechanism is budget-balanced if the platform has non-negative utility at the end of the auction. That is $\sum_{r_j \in R_s}q_j - \sum_{w_i \in W_s}p_i \geq 0$.  
\subsubsection{Computational Efficiency}
An incentive mechanism is computationally efficient if it runs in polynomial time. 
\subsubsection{Truthfulness}
An incentive mechanism is truthful if neither requester nor worker can increase its utility by submitting the false task valuation or cost information. In other words, submitting the true valuation or cost information is a dominant strategy for all participants.

\section{Expected Social Welfare Maximizing Problem}
\begin{figure}
	\centering
	\subfloat[Running Time \label{fig:running_time}]{%
		\resizebox{0.24\textwidth}{!}{\includegraphics[width=\linewidth]{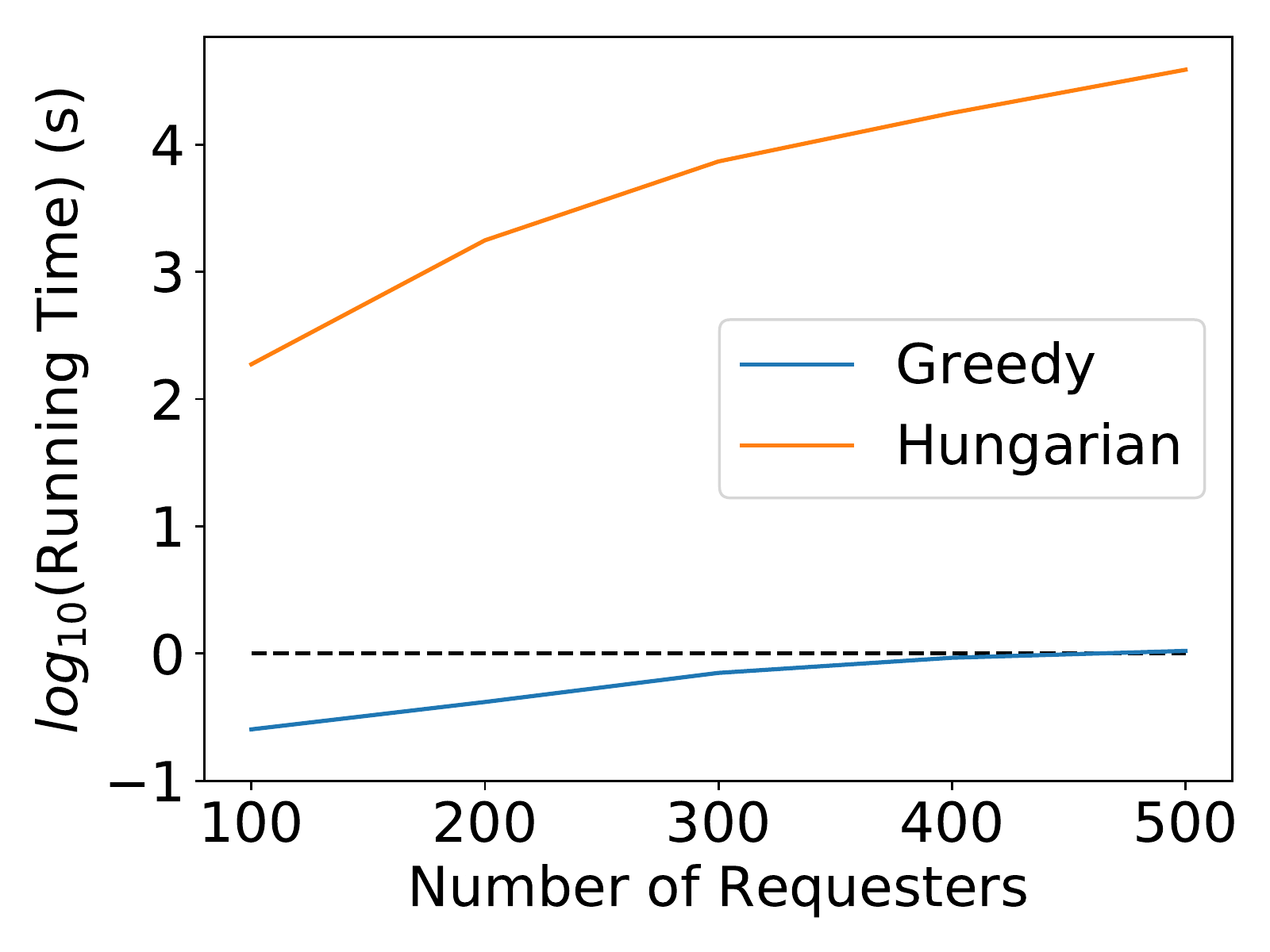}}}
	\subfloat[Expected Social Welfare \label{fig:expected}]{%
		\resizebox{0.24\textwidth}{!}{\includegraphics[width=\linewidth]{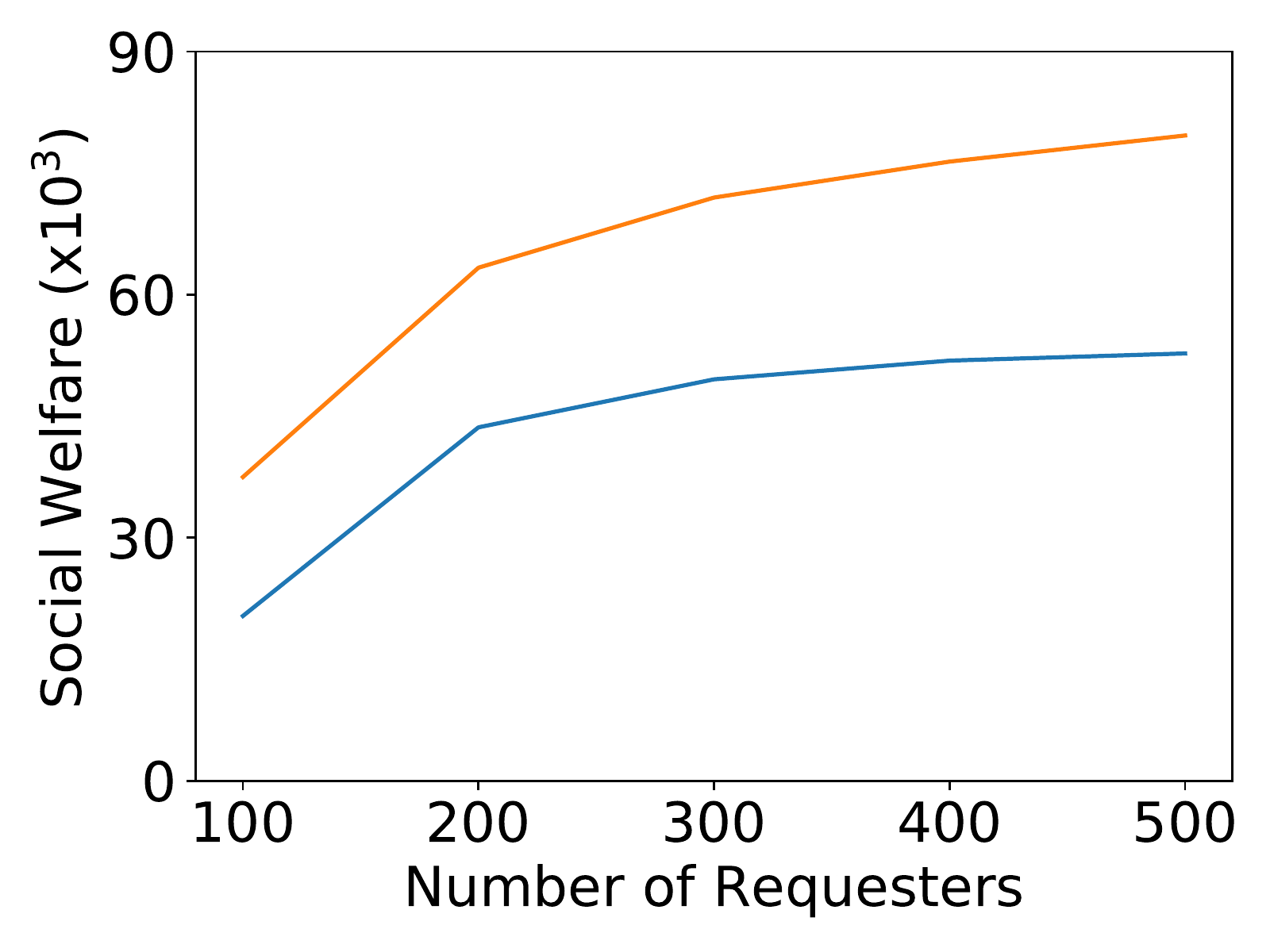}}}
	\caption{Hungarian Algorithm VS Greedy Algorithm} 
	\label{fig:optimal}
\end{figure}
\label{part:3}
In our system, the objective of the platform is to find the optimal requester-worker matches that maximize the expected social welfare. As one requester is matched with only one worker, all possible combinations of $ (r_j, w_i)$ pairs can be defined as a matrix $L$ below
\begin{equation}
L = 
\begin{bmatrix}
l_{11}  &   l_{12}  & \dots   &   l_{1\left|W\right|}       \\
l_{21}  &   l_{22}  & \dots   &   l_{2\left|W\right|}       \\
\vdots  &  \vdots   & \vdots  &   \vdots  \\   
l_{\left|R\right|1} & l_{\left|R\right|2} & \dots & l_{\left|R\right|\left|W\right|} \\
\end{bmatrix}
,
\end{equation}
where each element $l_{ji} = 1$ only if $r_j \in R$ and $w_i \in W$ are matched, otherwise 0. Thus, based on the $L$ matrix and (\ref{eq:4}), we can formulate the expected social welfare maximizing problem to find the optimal requester-provider matches $L^*$ as follows       
\begin{equation}
\label{eq:15}
L^*=\argmax_{L} \sum_{r_j \in R}\sum_{w_i \in W} (\mathbf{E}_i(v_j(t)) - c_i) l_{ji},  
\end{equation}
subject to 
\begin{equation}
\tag{12.a}
\label{eq:16}
l_{ji} \in \{0, 1\},\quad \forall l_{ji} \in L,
\end{equation}
\begin{equation}
\tag{12.b}
\label{eq:13}
\sum_{r_j \in R}\sum_{w_i \in W} l_{ji} \leq K,
\end{equation}
\begin{equation}
\tag{12.c}
\label{eq:14}
\sum_{w_i \in W}l_{ji} \leq 1,\quad \forall r_j \in R,
\end{equation}
\begin{equation}
\tag{12.d}
\label{eq:12}
\sum_{r_j \in R}l_{ji} \leq 1,\quad \forall w_i \in W.
\end{equation}

The objective function (\ref{eq:15}) is a combinatorial optimization problem to select the optimal requester-provider pairs that maximize the expected social welfare defined as (\ref{eq:4}) subject to constraints (\ref{eq:13}), (\ref{eq:14}), and (\ref{eq:12}). As mentioned in the definition of a $L$ matrix, $l_{ji}$ in (\ref{eq:16}) is a binary variable to indicate whether $r_j \in R$ and $w_i \in W$ are paired or not. Constraint (\ref{eq:13}) states that the platform has a limited capacity to handle $K$ task requests. In an ideal case where the platform can manage all the incoming task requests, $K$ in (\ref{eq:13}) can be set to $\left|R\right|$. Constraint (\ref{eq:14}) and (\ref{eq:12}) state that each requester can be matched to only one worker and vice versa.

In the ideal system where all the task requests can be handled, the optimal requester-provider matches $L^*$ can be obtained via the Hungarian algorithm \cite{hungarian}. Guaranteeing to solve the assignment problem in polynomial time, the Hungarian algorithm, also known as Munkres assignment algorithm, can find the optimal requester-worker pairs in $\mathcal{O}(n^3)$ when $n$ is $\max \{\left|R\right|, \left|W\right|\}$. However, despite its guaranteed polynomial running time, the running time of the Hungarian algorithm significantly increases to solve (\ref{eq:12}). To test the feasibility of the Hungarian algorithm in practice, we implemented it in Ubuntu 18.04.1 LTS equipped with Intel® Xeon(R) CPU E5-2630 @ 2.30GHz (24 cores) and 36GB RAM. In the test, we varied $\left|R\right|$ from 100 to 500 in an increment of 100, setting $K$ and $\left|W\right|$ to 100 and twice of $\left|R\right|$, respectively. Note that we first find the optimal pairs including all the requesters, and then select the top $K$ pairs. As shown in Fig \ref{fig:running_time}, there exists a significant difference in the time required to find the K optimal pairs. For instance, the Hungarian algorithm completes in approximately 190 seconds, while a greedy algorithm ends in 0.25 second. Moreover, such time difference widens as $\left|R\right|$ increase: the Hungarian algorithm requires almost half a day (38711 seconds $\approx$ 11 hours) to find the optimal pairs for 500 requesters, while the greedy approach-based algorithm requires approximately a second to complete (1.05 second). 
Consequently, it is not feasible to deploy the Hungarian algorithm to find the optimal requester-worker pairs in the growing IoT networks where instantaneous interactions between numerous IoT devices are demanded, despite its optimal return in social welfare as shown in Fig. \ref{fig:expected}.  

In addition, the native Hungarian algorithm does not guarantee the aforementioned desirable economic properties (individual rationality, budget-balance, and truthfulness) which are essential to sustain the crowdsourcing service despite achieving higher expected social welfare. Thus, to address this limitation, we propose an expected social welfare maximizing mechanism (ESWM) that is based on a greedy algorithm to heuristically obtain the locally optimal solution. Considering heterogeneity in task depreciation speed and workers' punctuality, the ESWM selects appropriate requester-worker pairs in a polynomial time (within a couple of seconds). Note that when the platform handles a large number of tasks in real-world IoT systems, the real-time response is essential for the practical deployment of the algorithm. Unlike the existing works, our ESWM aims to achieve a higher social welfare or platform utility in long-term view by attracting and retaining more participants, rather than attempting to simply maximize the platform utility in a given round of auction. In addition, the ESWM achieves individual rationality, budget-balance, computational efficiency, and truthfulness. 


\section{ESWM Mechanism}
\label{part:4}
\begin{figure}
	\centering
	\includegraphics[width=0.5\textwidth]{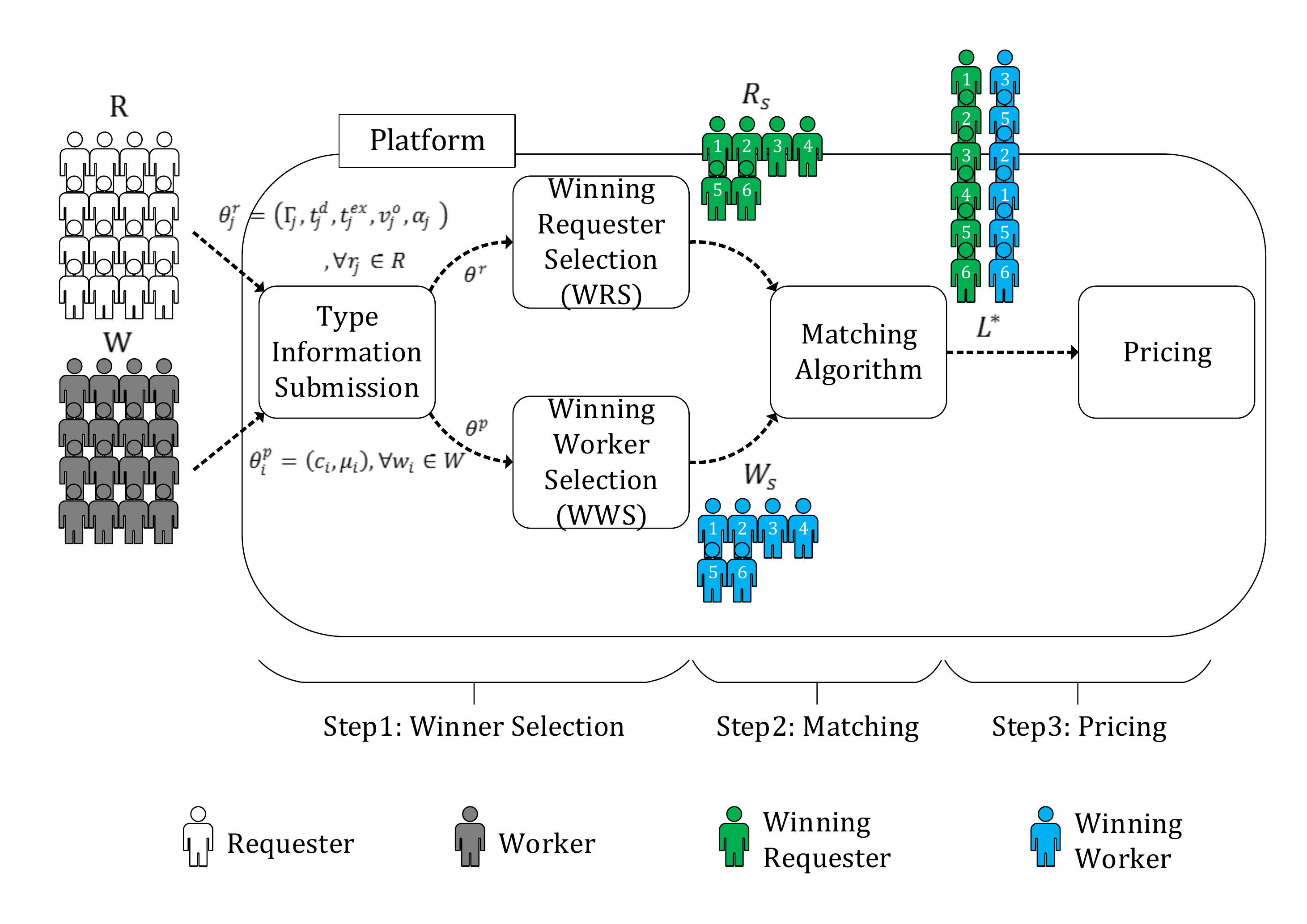}
	\caption{Workflow of H-ESWM Mechanism}
	\label{fig:7}
\end{figure}
In this section, we propose an expected social welfare maximizing (ESWM) mechanism, which largely consists of 3 main steps: 1) winner selection step, 2) matching step, and 3) pricing step. In Fig. \ref{fig:7}, the overall workflow of H-ESWM mechanism is shown.

\subsection{Winner Selection Step}        
In the initial stage of each double auction, a set of requester $R$ and a set of provider $W$ submit to the platform their type information $\theta^r$ and $\theta^p$, respectively. Ultimately, the platform aims to select the same number of winners from both $R$ and $W$. As the platform is assumed to have a limited capacity $K$ to handle task requests, the maximum numbers of $R_s$ and $W_s$ are set to $K$.  
\subsubsection{Winning Requester Selection Algorithm (WRSA)}
The winner selection criterion in the WRSA is straightforward; select requesters whose task valuation is high and slowly depreciates the task valuation after the deadline. The platform iteratively selects a requester with the maximum ratio of $v_j^o/(\alpha_j^{\beta}\left|\Gamma_j\right|)$ among $R$ as a winning requester (line 3-4) as detailed in Algorithm \ref{WRA}. Note that $\beta$ is a tunable parameter to adjust weight between $v_j^o/\left|\Gamma_j\right|$ and $\alpha_j$. 
By controlling $\beta$, we can decide the priority order between unit task valuation and depreciation speed in the winning requester selection. The effect of $\beta$ on the platform performance will be discussed in the later section.
The selection process repeats until the number of $R_s$ reaches $K+1$ or every $r_j\in R$ is selected. Among $R_s$, the platform chooses a requester with the minimum ratio of $v_j^o/(\alpha_j^{\beta}\left|\Gamma_j\right|)$ as a threshold requester, and excludes it from $R_s$ (line 9-10).

\begin{algorithm}
	\caption{Winning Requester Selection Algorithm (WRSA)}
	\label{WRA}
	
	\SetKwInOut{Input}{Input}
	\SetKwInOut{Output}{Output}
	
	\Input{$R, K$, $\beta$}
	\Output{$R_s, r_{th}$}
	
	$R_s \leftarrow \emptyset$\; 
	\While{\texttt{$\left| R_s \right| \neq K+1$}}
	{   
		$r^* \leftarrow  \arg\underset{r_j \in R}\max \frac{1}{\alpha_j^{\beta}}\frac{v_j^o}{\left|\Gamma_j \right|}$\;
		$R_s \leftarrow R_s \cup \{r^*\}$,
		$R \leftarrow R \setminus \{r^*\}$\;
		
		\If{$\left|R\right|=0$}
		{
			\textbf{break}\;
		}
	}
	$r_{th} \leftarrow  \arg\underset{r_j \in R_s}\min  \frac{1}{\alpha_j^{\beta}}\frac{v_j^o}{\left|\Gamma_j \right|}$\;  
	$R_s \leftarrow R_s\setminus\{r_{th}\}$\; 
	
	\Return{$R_s, r_{th}$}
\end{algorithm}

\subsubsection{Winning Worker Selection Algorithm (WWSA)}
The criterion and selection process of the WWSA are similar to those of the WRSA. The WRSA selects providers who have low ask values and high probability to meet the deadline. In each iteration, the platform selects a worker with a minimum ratio of $c_i/\lambda_i^{\beta}$ among $W$ as a winning worker (line 3-4). As in the WRSA, $\beta$ is a tunable parameter value to adjust where the platform puts more weight on between $c_i$ and $\lambda_i$ for the winner selection. By using the ratio, it tries to select a worker with high punctuality as well as a low-cost value. The platform repeats such selection process until the size of $W_s$ reaches $K+1$ or every $w_i\in W$ is selected. Then, the platform chooses a worker with a maximum ratio of $c_i/\lambda_i^{\beta}$ among $W_s$ as a threshold worker, who is ruled out from $W_s$ (line 9-10). 
\begin{algorithm}
	\caption{Winning Worker Selection Algorithm (WWSA)}
	\label{WPA}
	
	\SetKwInOut{Input}{Input}
	\SetKwInOut{Output}{Output}
	
	\Input{$W, K$, $\beta$}
	\Output{$W_s, w_{th}$}
	
	$W_s \leftarrow \emptyset$\;	
	\While{$\left|W_s\right|\neq K+1$}
	{
		$w^* \leftarrow \arg \underset{w_i \in W}{\min} \frac{c_i}{\lambda_i^{\beta}}$\;
		$W_s \leftarrow W_s \cup \{w^*\}$, 
		$W \leftarrow W \setminus \{w^*\}$\;		
		
		\If{$\left|W\right|=0$}
		{
			\textbf{break}\;
		}
	}
	$w_{th} \leftarrow  \arg\underset{w_i \in W_s}\max \frac{c_i}{\lambda_i^{\beta}}$\;  
	$W_s \leftarrow W_s \setminus \{w_{th}\}$\;	
	\Return{$W_s, w_{th}$}
\end{algorithm}

\subsection{Matching Step}
\label{part:4-matching}
In the matching step, the platform iteratively matches $R_s$ to $W_s$ in such a way that unmatched $r_j \in R_s$ with the maximum $v^o_j /({\alpha_j^{\beta}}\left|\Gamma_j\right|)$ is assigned to unmatched $w_i \in W_s$ with the minimum $c_i / {\lambda_i^{\beta}}$ as detailed in Algorithm \ref{Matching1}. By such matching criterion, the platform pairs a $r_j \in R_s$ whose task remains in high valuation even after its deadline with a $w_i$ who is cost-effective and punctual.    

Before the matching of $R_s$ and $W_s$, the platform first inspects whether both $R_s$ and $W_s$ can be one-to-one matched and replaces, if necessary, either $r_{th}$ or $w_{th}$, by running the trimming algorithm (TA) (line 4) as detailed in Algorithm \ref{Trimming}.
\begin{algorithm}
	\caption{Trimming Algorithm (TA)}
	\label{Trimming}
	
	\SetKwInOut{Input}{Input}
	\SetKwInOut{Output}{Output}

	\Input{$R_s, W_s, r_{th}, w_{th}$, $\beta$}
	\Output{$R_s, W_s, r_{th}, w_{th}, Q, P$}
	
	\text{Assume $\frac{v^o_1}{\alpha_1^{\beta}\left|\Gamma_1\right|}\geq \dots \geq \frac{v^o_{\left|R_s\right|}}{\alpha_{\left|R_s\right|}^{\beta} \left|\Gamma_{\left|R_s\right|}\right|}\geq \frac{v^o_{th}}{\alpha_{th}^{\beta}\left|\Gamma_{th}\right|}$}\;
	\text{Assume $\frac{c_1}{{\lambda_1}^{\beta}}\leq \dots \leq \frac{c_{\left|W_s\right|}}{{\lambda_{\left|W_s\right|}^{\beta}}} \leq \frac{c_{th}}{{\lambda_{th}^{\beta}}}$}\;
	$Q \leftarrow \emptyset$, $P \leftarrow \emptyset$\;
	\If{$\left|R_s\right| < \left|W_s\right|$}
	{
		$w_{th} \leftarrow W_s[\left|R_s\right|+1]$\;
		$W_s \leftarrow W_s[1 \sim \left|R_s\right|]$\;
	}
	\ElseIf{$\left|R_s\right| > \left|W_s\right|$}
	{
		$r_{th} \leftarrow R_s[\left|W_s\right|+1]$\;
		$R_s \leftarrow R_s[1 \sim \left|W_s\right|]$\;
	}
	$q_j \leftarrow 0, \forall r_j \in R_s$\; 
	$p_i \leftarrow 0, \forall w_i \in W_s$\; 
	
	\For{$(r_j, w_i) \in (R_s, W_s)$}
	{	
		$q_j \leftarrow  \max\{q_j,  \frac{\alpha_j^{\beta}}{{\alpha_{th}^{\beta}}}\frac{v^o_{th}}{\left|\Gamma_{th} \right|}  \left|\Gamma_j \right|\}  $\;
		$p_i \leftarrow  \max\{p_i, \frac{a_{th}}{{\lambda_{th}^{\beta}}}{\lambda_i^{\beta}}\}$\;
		$Q \leftarrow Q \cup \{q_j\}$\;	
		$P \leftarrow P \cup \{p_i\}$\;	
	}
	
	\Return{$R_s, W_s, r_{th}, w_{th}, Q, P$}
\end{algorithm}
In the TA, the platform trims $R_s$ and $W_s$ returned by the WRSA and the WWSA in such a way that both $R_s$ and $W_s$ have the same size, i.e., $\left|R_s\right|=\left|W_s\right|$, concurrently replacing either $r_{th}$ or $w_{th}$ (line 4 or 11). After the trimming process, the platform calculates temporary fee $q_j$ for each $r_j \in R_s$ and temporary payment $p_i$ to each $w_i \in W_s$. 
For the calculation of $q_j$ and $w_i$, the platform uses the ratio value $v_{th}/({\alpha_{th}^{\beta}}\left|\Gamma_{th}\right|)$ of $r_{th}$ and the ratio value $c_{th} / {\lambda_{th}^{\beta}}$ of $w_{th}$, which consequently make $q_j$ and $p_i$ the critical values to guarantee truthfulness of requesters and workers, respectively (line 14-19).
Note that we denote $q_j$ ($p_i$) temporary fee (temporary payment) because both $q_j$ and $p_i$ in the TA were initially calculated based on the assumption that all $w_i \in W_s$ will meet the deadline and all $r_j\in R_s$ will accordingly achieve their full valuation. Later when $w_i \in W_s$ completes its assigned task, both $q_j$ and $p_i$ will be updated in the Pricing Algorithm. 

After the TA, the platform checks whether the budget-balance holds (line 5). If it holds, the platform starts the matching process (line 10-12). Otherwise, the platform revokes the auction (line 6-8).
\begin{algorithm}
	\caption{Matching Algorithm (MA)}
	\label{Matching1}
	\SetKwInOut{Input}{Input}
	\SetKwInOut{Output}{Output}
	
	\Input{$R_s, W_s, r_{th}, w_{th}, \beta$}
	\Output{$Match, Q, P$}
	
	\text{Assume $\frac{v^o_1}{\alpha_1^{\beta}\left|\Gamma_1\right|}\geq \dots \geq \frac{v^o_{\left|R_s\right|}}{\alpha_{\left|R_s\right|}^{\beta} \left|\Gamma_{\left|R_s\right|}\right|}\geq \frac{v^o_{th}}{\alpha_{th}^{\beta}\left|\Gamma_{th}\right|}$}\;
	\text{Assume $\frac{c_1}{{\lambda_1}^{\beta}}\leq \dots \leq \frac{c_{\left|W_s\right|}}{{\lambda_{\left|W_s\right|}^{\beta}}} \leq \frac{c_{th}}{{\lambda_{th}^{\beta}}}$}\;
	
	$Match \leftarrow \emptyset$\;
	
	$R_s, W_s, r_{th}, w_{th}, Q, P =  \text{TA}(R_s, W_s, r_{th}, w_{th}, \beta)$\;
	\If{$\underset{p_i \in P}{\sum}p_i > \underset{q_j \in Q}{\sum}q_j$}
	{
		$R_s \leftarrow \emptyset, W_s \leftarrow \emptyset$\;	
		$Q \leftarrow \emptyset, P \leftarrow \emptyset$\;
		\Return{$Match, Q, P$}
	}	
	
	\For{$(r_j, w_i) \in (R_s, W_s)$}
	{	
		
		$Match \leftarrow Match \cup (r_j, w_i) $\;
		
	}
	\Return{$Match, Q, P$}
\end{algorithm}

\subsection{Pricing Step}
Unlike the existing works where the fee for $R_s$ and the payment to $W_s$ are determined before the task submission and do not change, our mechanism determines the final fee and payment, called \textit{effective} fee and payment, depending on the task valuation at the task submission time ($t_i^{sub}$) of each $w_i \in W_s$. As detailed in Algorithm \ref{PA}, when $w_i \in W_s$ is matched to $r_j \in R_s$ and submits its task result at $t_i^{sub}$, the platform decides the effective fee $q'_j$ for $r_j$ and the effective payment $p'_i$ to $w_i$ in proportion to the ratio of achieved valuation $v_j(t_i^{sub})$ to the original (full) valuation $v_j^o$, respectively (line 4-10). The platform can run the pricing processes for each match $(r_j, w_i)$ in parallel. Note that temporary $q_j$ and $p_i$ were calculated in the MA. 
\begin{algorithm}
	\caption{Pricing Algorithm (PA)}
	\label{PA}
	
	\SetKwInOut{Input}{Input}
	\SetKwInOut{Output}{Output}
	
	\Input{$Match, Q, P$}
	\Output{$Q', P'$}
	
	$Q' \leftarrow \emptyset, P' \leftarrow \emptyset$\; 
	$q'_j \leftarrow 0, \forall w_i \in Match$\; 	
	$p'_j \leftarrow 0, \forall r_j \in Match$\;
	
	\ForAll{\texttt{$(r_j, w_i) \in Match$}}
	{   
		$t_i^{sub}$ $\leftarrow$ \text{$w_i$'s task submission time}\;
		$q_j \leftarrow$ \text{‘fee for $r_j$’ $\in Q$ calculated before $t_{sub}^i$}\;
		$p_i \leftarrow$ \text{‘payment to $w_i$’ $\in P$ calculated before $t_{sub}^i$}\; 
		
		$q'_j \leftarrow \frac{v_j(t_i^{sub})}{v_j^o}q_j, Q' \leftarrow Q'\cup \{q'_j\}$\;
		$p'_i \leftarrow \frac{v_j(t_i^{sub})}{v_j^o}p_i, P' \leftarrow P'\cup \{p'_j\}$\;
	}
	\Return{$Q', P'$}
\end{algorithm}

The entire process of H-ESWM is shown in Algorithm \ref{ESWM}. The platform first runs the WRSA and the WWSA to decide the winners in a double auction. Then, it runs the MA to match $R_s$ to $W_s$. Lastly, the platform decides both effective fee and payment in the PA.	
\begin{algorithm}
	\caption{H-ESWM}
	\label{ESWM}
	
	\SetKwInOut{Input}{Input}
	\SetKwInOut{Output}{Output}
	
	\Input{$R, W, K, \beta$}
	\Output{$Match, P', Q'$}
	
	$R_s, r_{th}$ = WRSA($R,K$)\;
	$W_s, w_{th}$ = WWSA($W,K$)\;
	
	$Match, Q, P$ = MA($R_s, W_s, r_{th}, w_{th}$)\;
	$Q', P'$ = PA($Match, Q, P $)\;	
	\Return{$Match, Q', P'$}
\end{algorithm}

\section{Performance Evaluation}
\label{part:5}
In this section, we evaluate the performance of the ESWM mechanism in both short-term and long-term scenarios. As crowdsourcing service platforms may \textit{continuously} compete with \textit{each other} in real IoT-based crowdsourcing systems, the current performance of platforms can also affect the performance in the future competitions. Thus, we first compare the performance of proposed ESWM mechanism with one of the existing works \cite{team1} in the short-term scenario with one round of auction. Then, we extend the evaluation to the long-term scenario where the previous performance metrics (average requester utility or worker utility) affect the current recruitment of requesters and workers. We let \cite{team1} and ESWM mechanism compete with each other and see how effectively they can attract participants. 

\subsection{Performance Metrics}
As for performance metrics, we consider the social welfare, the platform utility, the average requester utility, and the average worker utility. We compare the performance metrics of our mechanism to those of the \textit{benchmark} \cite{team1} whose winner selection process is also based on the greedy algorithm, but only considering the ratio of $v_j^{max}/|\Gamma_j|$ and $c_i$. Lastly, we prove that our mechanism achieves the four desirable economic properties. Before we provide the simulation results, we define each performance metric as below. 

\subsubsection{Social Welfare}
The social welfare is divided into two categories: 1) the Na\"\i ve social welfare (NSW) as defined in (\ref{eq:20}), and 2) the expected social welfare (ESW) as defined in (\ref{eq:4}).
While the expected social welfare takes potential task depreciation into account, the Na\"\i ve social welfare merely assumes the ideal case where all the tasks are completed in time. 

\subsubsection{Platform Utility}
The platform utility is as defined in (\ref{eq:3}). 

\subsubsection{Average Requester \& Worker Utility}
The average requester utility is defined as follows
\begin{equation}
\bar{u}_r=\frac{\sum_{r_j \in R} u_j^r}{\left|R\right|}.
\end{equation}
Similarly, the average worker utility is defined as follows
\begin{equation}
\bar{u}_p=\frac{\sum_{p_i \in W} u_i^p}{\left|W\right|}.
\end{equation}

\begin{figure*}[!htb]
	\centering
	\subfloat[Social Welfare \label{subfig-2:sw}]{%
		\resizebox{0.24\textwidth}{!}{\includegraphics[width=\linewidth]{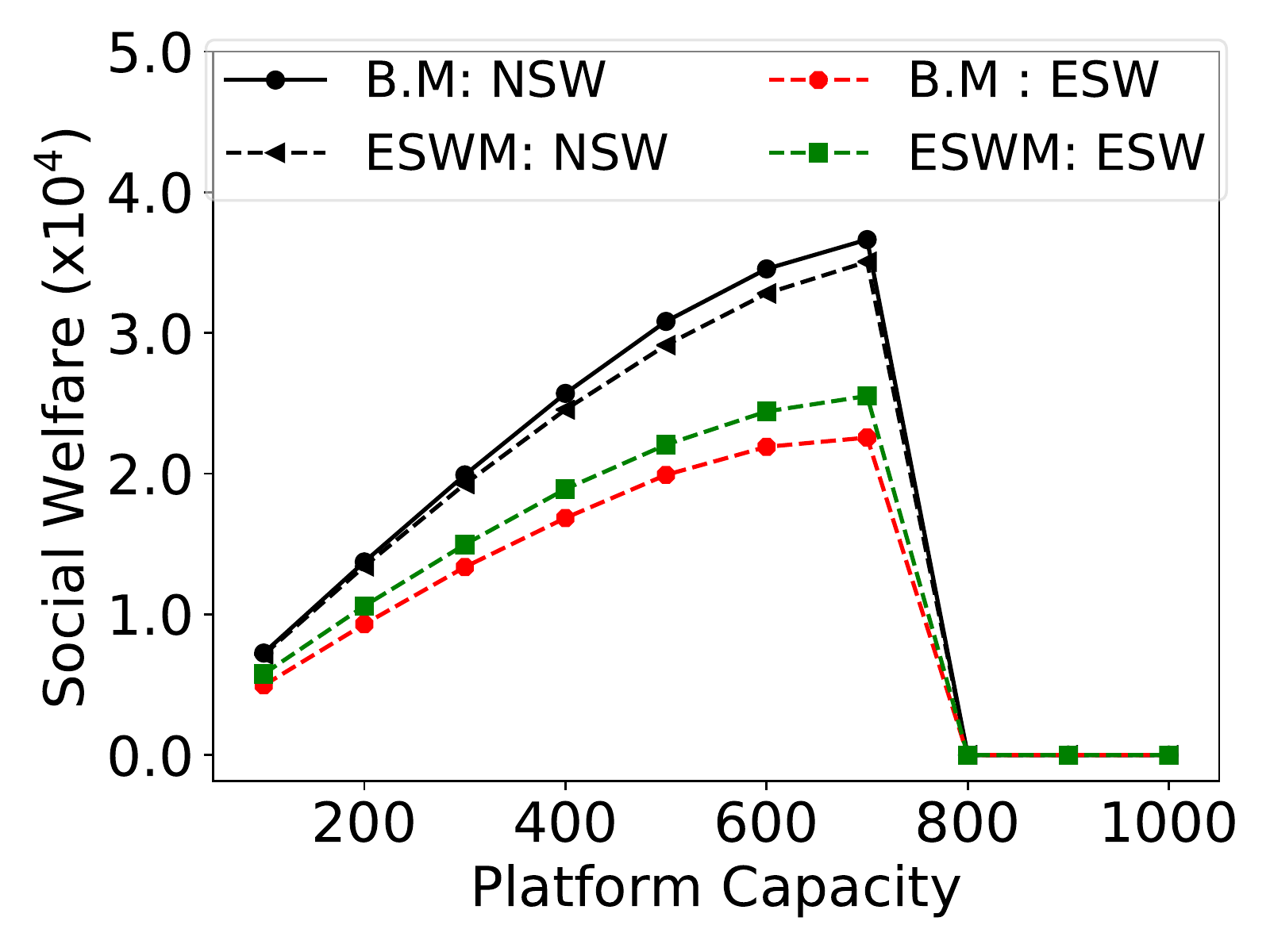}}}
	\subfloat[Platform Utility \label{subfig-2:platform}]{%
		\resizebox{0.24\textwidth}{!}{\includegraphics[width=\linewidth]{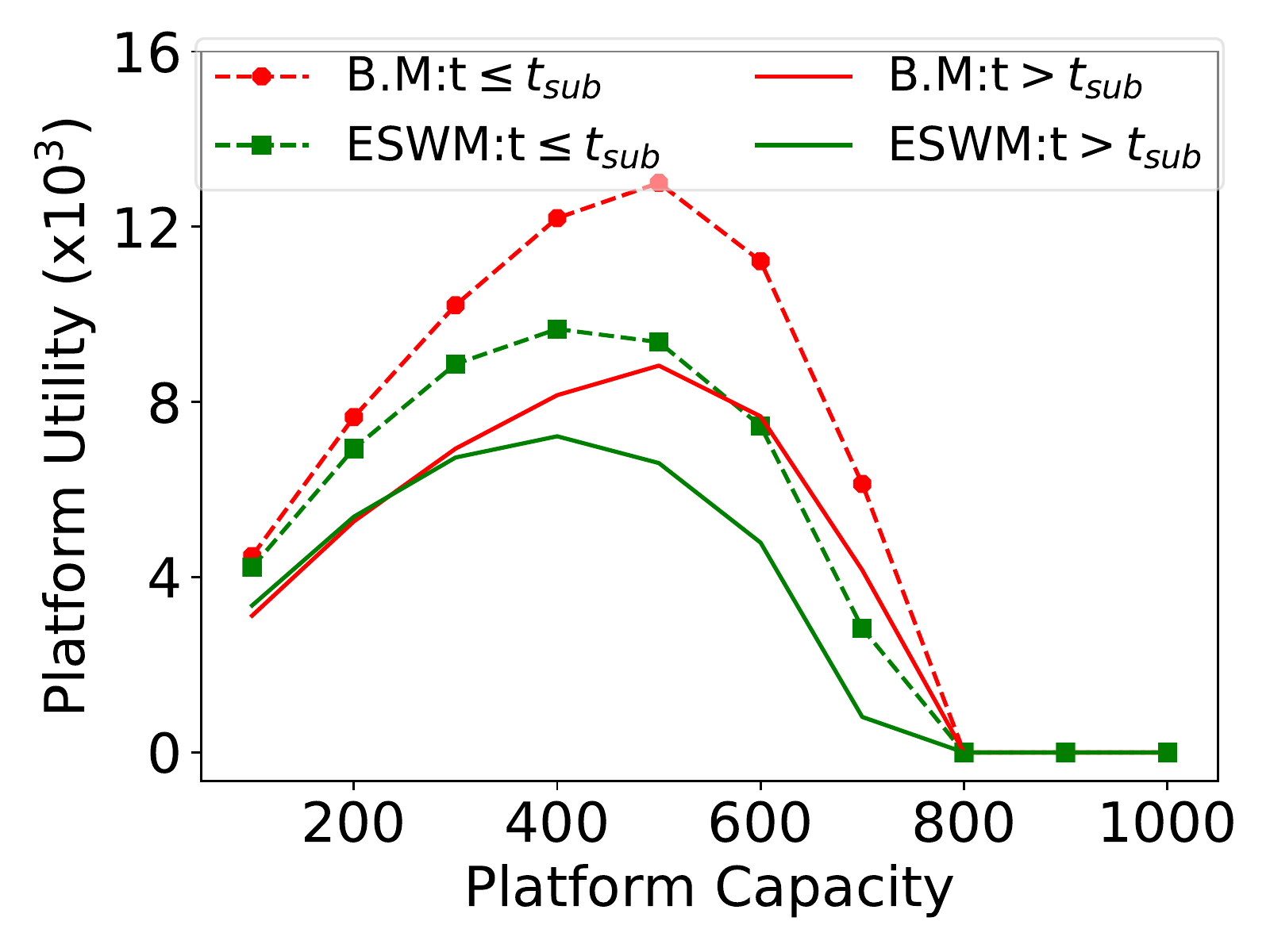}}}
	\hfill
	\subfloat[Average Requester Utility \label{subfig-2:requester}]{%
		\resizebox{0.24\textwidth}{!}{\includegraphics[width=\linewidth]{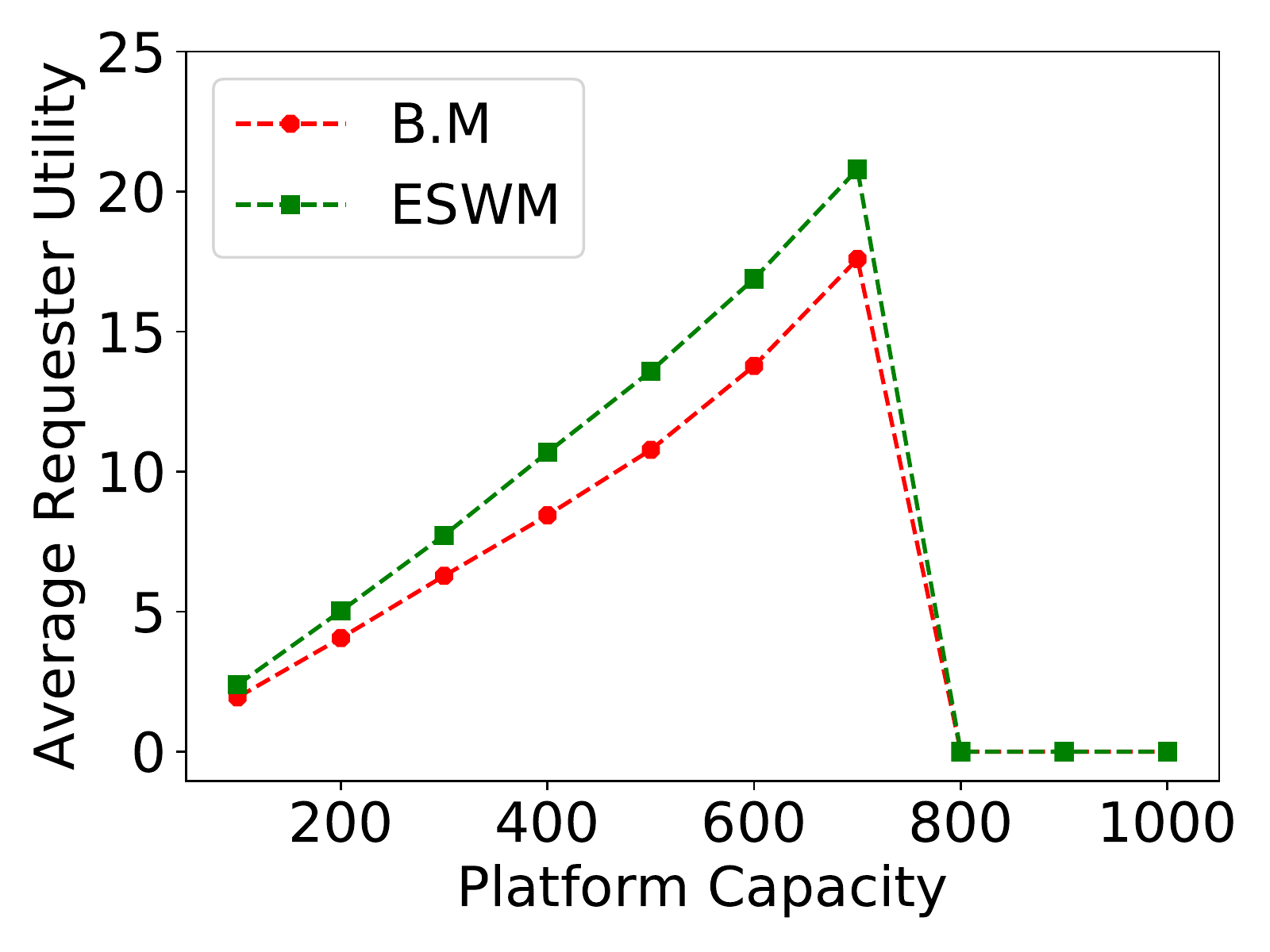}}}
	\subfloat[Average Worker Utility \label{subfig-2:worker}]{%
		\resizebox{0.24\textwidth}{!}{\includegraphics[width=\linewidth]{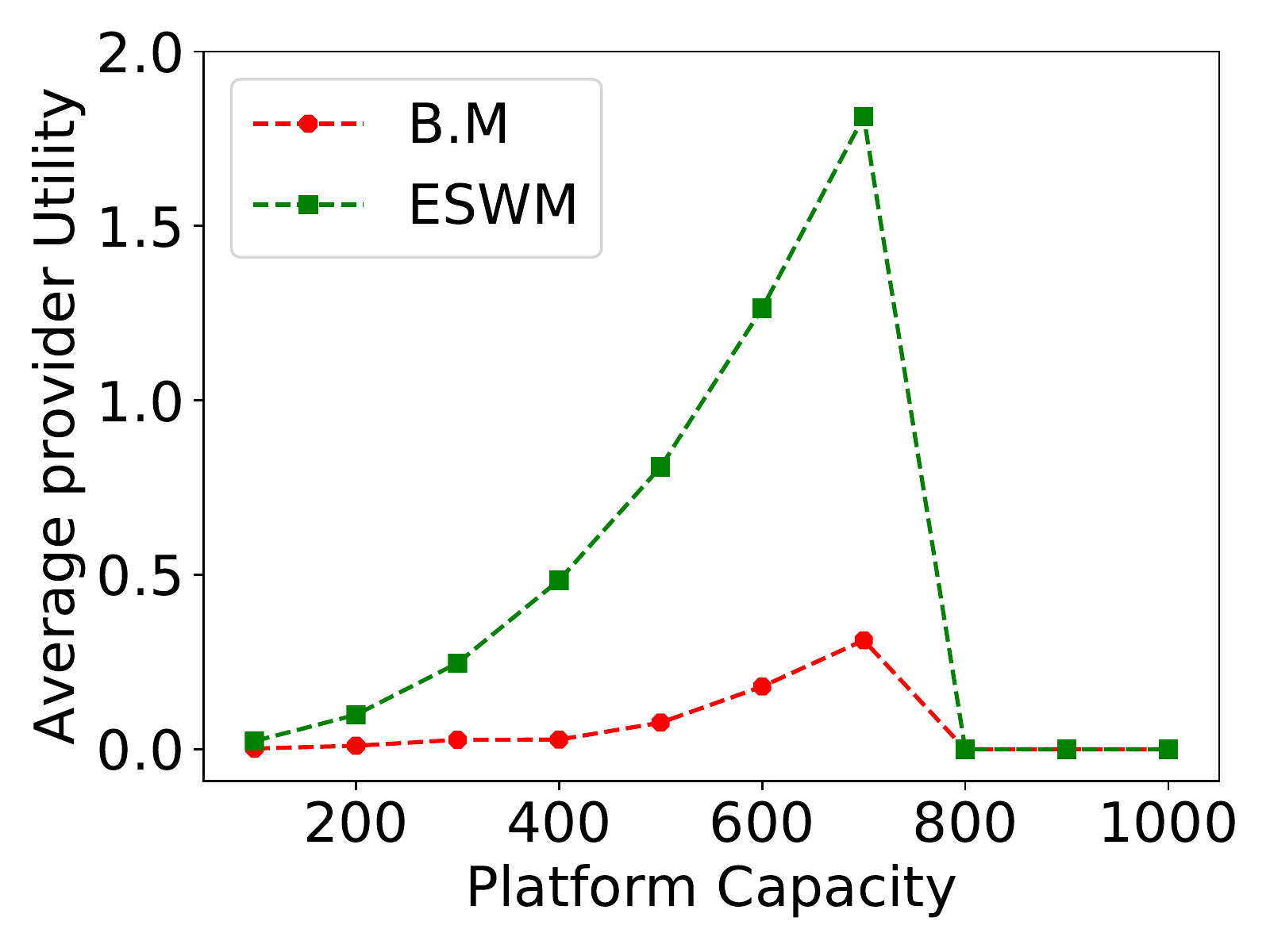}}}
	\caption{Benchmark (B.M) vs ESWM in a Single Auction} 
	\label{fig:5}
\end{figure*}

\subsection{Simulation Setting}
In our simulations, we uniformly distribute $v_j^{max}$, $\left|\Gamma_j\right|$, $t_j^d$, and $t_j^{ex}$ for requesters over (0, 100] and [1, 10], (0,100], and [$t_j^d$, 1.5$t_j^d$], respectively. And we uniformly distribute $\alpha_j$ over (0, 100] to include the case where a task becomes valueless right after the deadline (when $v_j^{max}=100$ and $\alpha_j=100$). 
For workers, we uniformly distribute $c_i$ and $\mu_i$ over (0, 10] and (0, 1.5], respectively. All the simulation results for the performance metrics are averaged over 200 runs.     

\subsection{Benchmark vs ESWM in a Single Auction}
\label{sim:1}
In this simulation, we compare the performance of the ESWM mechanism to the benchmark when both are given the same 1,000 requesters and 2,000 workers with $\beta=0.5$.  

Fig. \ref{subfig-2:sw} shows the social welfare of the benchmark and the ESWM mechanism, increasing the platform capacity from 100 to 1000 in an increment of 100. In terms of the na\"\i ve social welfare, both mechanisms achieve almost the same value. However, the ESWM mechanism produces higher expected social welfare than the benchmark. This can be attributed to the fact that the ESWM mechanism focuses on the estimated valuation which will be achieved in the submission time rather than the original valuation which may depreciate. In contrast, the benchmark mechanism shows a wider gap between its na\"\i ve social welfare and expected one, failing to capture the potential task depreciation.

Fig. \ref{subfig-2:platform} shows the platform utility achieved by the benchmark and the ESWM mechanism, before and after the task submission of workers. In both cases, the benchmark mechanism produces higher platform utility than the ESWM mechanism. Such difference results from the difference in the pricing step. In the WRSA and the WWSA, the fee for $r_j \in R_s$ and the payment to $w_i \in W_s$ were calculated as 
\begin{equation}
\label{eq:23}
q_j = \frac{{\alpha_j^{\beta}}}{{\alpha_{th}^{\beta}}}\frac{v^{max}_{th}}{\left|\Gamma_{th}\right|}\left|\Gamma_j\right|, 
p_i = \frac{{\lambda_i^{\beta}}}{{\lambda_{th}^{\beta}}}c_{th}.
\end{equation}
For $r_j \in R_s$, $\alpha_j$ tends to be smaller than $\alpha_{th}$ since the ESWM mechanism iteratively selects winners with the maximum $v_j^{max} / (\alpha_j\left|\Gamma_j\right|)$. Consequently, $q_j$ is smaller than that of the benchmark calculated as $v^{max}_{th}\left|\Gamma_{j}\right|/\left|\Gamma_{th}\right|$. Similarly, $\lambda_i$ for $w_i \in W_s$ is likely to be larger than $\lambda_{th}$, as the ESWM mechanism iteratively selects as a winner worker with the minimum $c_i / \lambda_i^{\beta}$. Consequently, $p_i$ is larger than that of the benchmark calculated as $c_{th}$. However, after the task submission, the difference significantly decreases due to more frequent unpunctuality in the benchmark, which consequently inflicts more utility loss. 

Fig. \ref{subfig-2:requester} and Fig. \ref{subfig-2:worker} show the average requester utility and the average worker utility, respectively. As mentioned in the analysis of Fig. \ref{subfig-2:platform}, the ESWM mechanism charges $R_s$ less and rewards $W_s$ more. Consequently, the ESWM mechanism makes both requesters and workers achieve higher average utility by sacrificing its own utility. In a myopic strategy, it is not a rational decision for a platform to sacrifice its own utility for participants. However, as demonstrated in the next subsection, such sacrifice can bring about a positive effect to the platform itself in the long-term strategy. 
\subsection{Benchmark vs ESWM with Re-selection}
\label{bench_ESWM}
\begin{figure*}[!htb]
	\centering
	\subfloat[Social Welfare \label{subfig-1:sw}]{%
		\resizebox{0.24\textwidth}{!}{\includegraphics[width=\linewidth]{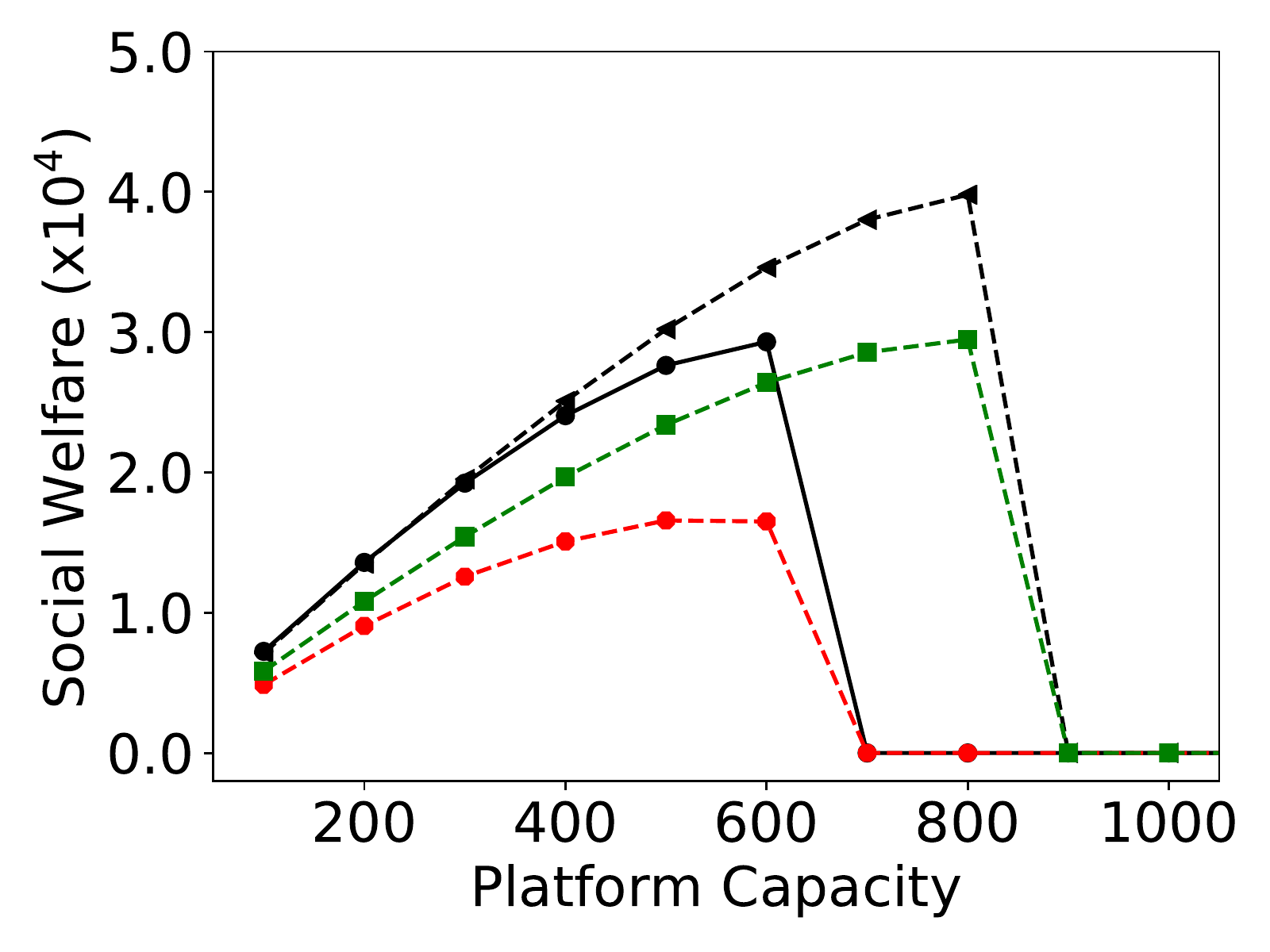}}}
	\subfloat[Platform Utility \label{subfig-1:platform}]{%
		\resizebox{0.24\textwidth}{!}{\includegraphics[width=\linewidth]{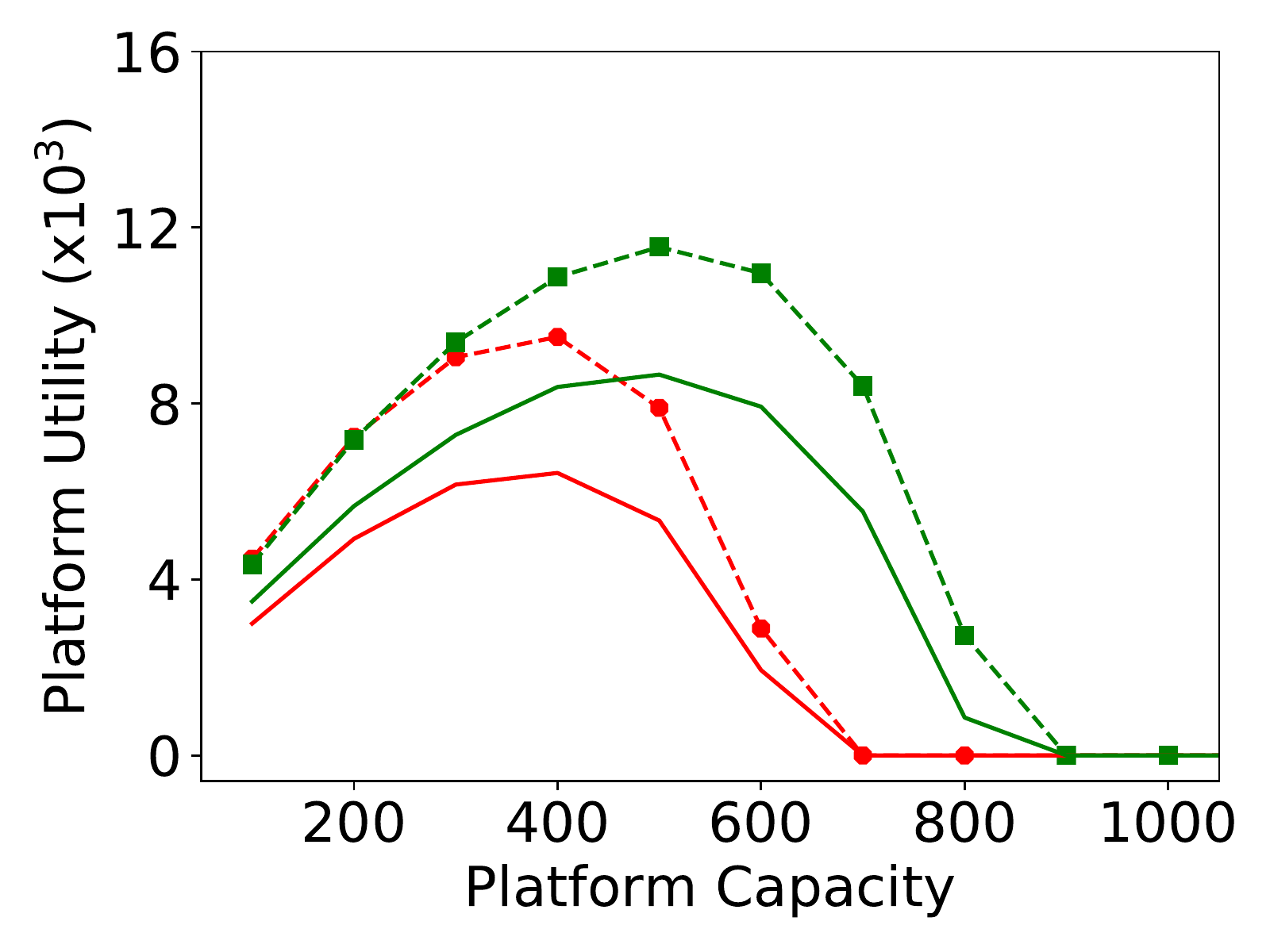}}}
	\hfill
	\subfloat[Average Requester Utility \label{subfig-1:requester}]{%
		\resizebox{0.24\textwidth}{!}{\includegraphics[width=\linewidth]{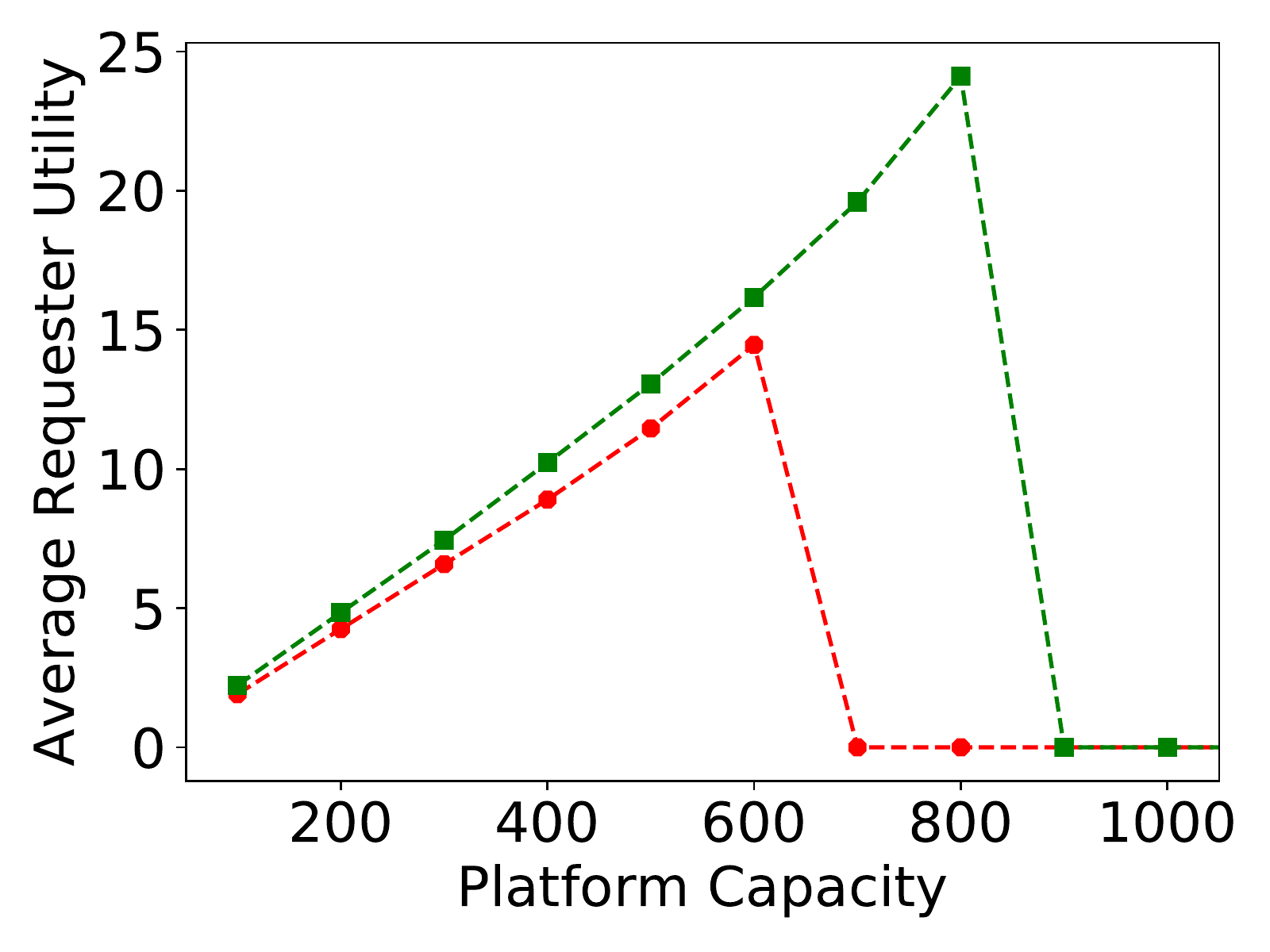}}}
	\subfloat[Average Worker Utility \label{subfig-1:worker}]{%
		\resizebox{0.24\textwidth}{!}{\includegraphics[width=\linewidth]{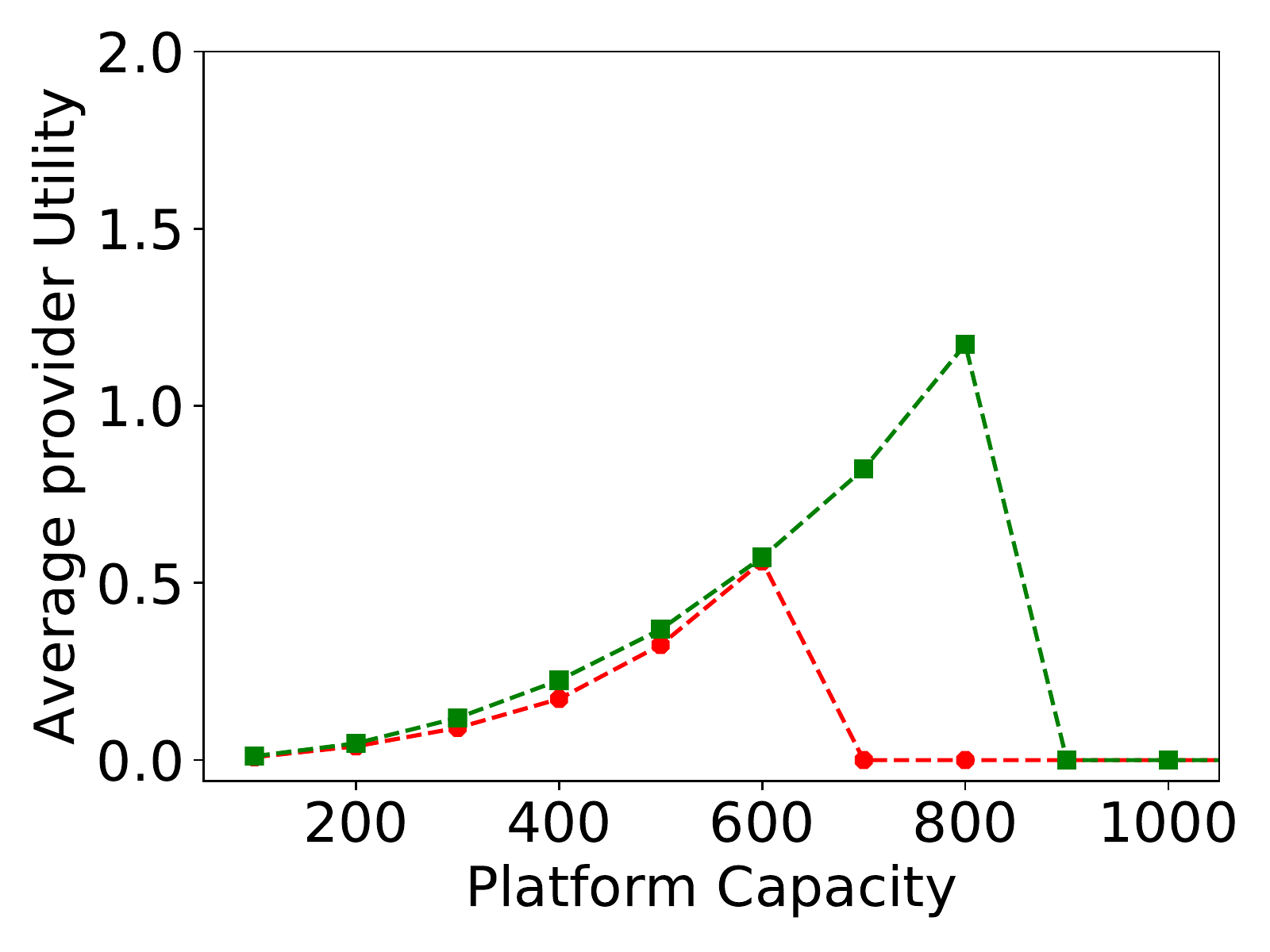}}}
	\caption{Benchmark (B.M) vs ESWM with Re-selection}
	\label{fig:6}
\end{figure*}
In the previous simulation, we showed that the ESWM mechanism achieves higher social welfare, average requester utility, and average worker utility than the benchmark, at the cost of its own utility. As an initial stage of the competition, both mechanisms were given the same number of requesters and workers in the previous simulation. That is, we assumed that both mechanisms could attract the same number of requesters and workers. 

However, in reality, how attractive each mechanism is to participants can differ since each provides different utility for participants. As all participants are rationally selfish, they select the mechanism that would provide higher utility to them. Consequently, when the benchmark and our mechanism compete in a crowdsourcing system, the number of participants that each mechanism attracts can vary depending on the average utility each provides.
Thus, to reflect such different attractiveness of each mechanism to participants, we define participation probability functions for requesters and workers as   
\begin{equation}
\label{eq:21}
p^r(\bar{u}_{r^A}, \bar{u}_{r^B}) = (\frac{\sqrt{\bar{u}_{r^A}}}{\sqrt{\bar{u}_{r^A}} + \sqrt{\bar{u}_{r^B}}}, \frac{\sqrt{\bar{u}_{r^B}}}{\sqrt{\bar{u}_{r^A}} + \sqrt{\bar{u}_{r^B}}}),
\end{equation}
\begin{equation}
\label{eq:22}
p^p(\bar{u}_{p^A}, \bar{u}_{p^B}) = (\frac{\sqrt{\bar{u}_{p^A}}}{\sqrt{\bar{u}_{p^A}} + \sqrt{\bar{u}_{p^B}}}, \frac{\sqrt{\bar{u}_{p^B}}}{\sqrt{\bar{u}_{p^A}} + \sqrt{\bar{u}_{p^B}}}),
\end{equation}
where $\bar{u}_{r^A}$ ($\bar{u}_{r^B}$) and $\bar{u}_{p^A}$ ($\bar{u}_{p^B}$) denote the average utilities of requesters and workers who joined mechanism A (mechanism B) in the previous recruitment, respectively. Given the average utility of participants in each mechanism, the participation probability function returns a tuple of probabilities that a participant decides to join each mechanism. In this work, we set the participation probability proportional to the square root of the average utility of participants obtained from the previous simulation results, based on \cite{concave}. According to the reference, the concavity of the square root function captures the diminishing impact of the utility on the participation probability. 

Based on the probabilities, each participant decides which mechanism it will join in the current recruitment. Consequently, when there exist two different mechanisms competing in a crowdsourcing system, a mechanism which has provided higher utility to participants can expect to attract more participants. We call such decision-making process of participants \textit{re-selection}. 
In the re-selection simulation, the benchmark mechanism and the ESWM mechanism compete with each other in a system with 2,000 requesters, 4,000 workers, and $\beta=0.5$. 

Fig. \ref{subfig-1:sw} shows the simulation results of the social welfare under such re-selection scenario. Compared to the previous simulation results, the ESWM mechanism achieves much higher na\"\i ve and expected social welfare than the benchmark. As the ESWM mechanism provided both requesters and workers with higher average utility, it is more appealing to both requesters and workers than the benchmark, which attracts more of them. Consequently, such quantitative growth leads to higher chance of taking beneficial participants, which ultimately increases the social welfare of the ESWM mechanism. Note that in the point of view of the platform, beneficial participants means a requester with high $v_j^{max}$ and low $\alpha_j$, and a worker with low $c_i$ and high $\lambda_i$. In addition, even with more number of participants, especially workers, the ESWM mechanism can still handle the task requests better than the benchmark (800 task requests while the benchmark can handle up to 600). The reason for such difference can be inferred from the platform utility. As shown in Fig. \ref{subfig-1:platform}, the platform utility of the benchmark is 0 after $K=900$. Thus, the platform, which is rational and selfish, revokes the double auction as its budget-balance condition is not satisfied.  

Moreover, unlike the previous simulation result, Fig. \ref{subfig-1:platform} shows that the ESWM mechanism achieves higher platform utility than the benchmark. For the same reason in the social welfare, such increased platform utility is achieved as the ESWM mechanism attracts more participants, which enables the platform to get better behaving participants. As a result, even though the ESWM mechanism sacrificed its own utility in the initial stage, its utility loss is compensated by attracting more participants in the long-term competition. In addition, the platform capacity to achieve the maximum platform utility increases from 400 to 500, which results in the higher maximum platform utility.

Fig. \ref{subfig-1:requester} and Fig. \ref{subfig-1:worker} show the average requester utility and the average worker utility, respectively. In both results, the ESWM mechanism and the benchmark achieve almost the same average utility as long as both can handle task requests. This can increase the number of requesters and workers participating in the ESWM mechanism, which ironically reduces the average utility of participants, especially workers. In other words, as the number of participants who join the ESMW mechanism increases, more participants fail to be selected as winners, which results in more zero utilities. As a result, the average participant utility of the ESWM mechanism decreases. Based on our observation from Fig. \ref{subfig-1:requester} and Fig. \ref{subfig-1:worker}, we can anticipate that there will not be a significant number of re-selections since the benchmark and the ESWM mechanism offer similar average utilities for both requesters and workers. That is, the benchmark and the ESWM mechanism have approached near the \textit{balance point} where both mechanisms are equally attractive to participants. Specifically, we can define the balance point as the case where $p^r(\bar{u}_{r^A}, \bar{u}_{r^B})$ and $p^p(\bar{u}_{r^A}, \bar{u}_{r^B})$ are discrete uniform distributions. 

\subsection{Effect of $\beta$}
\begin{figure*}[!htb]
	\centering
	\subfloat[Platform Utility \label{subfig-wide:platform}]{%
		\resizebox{0.23\textwidth}{!}{\includegraphics[width=\linewidth]{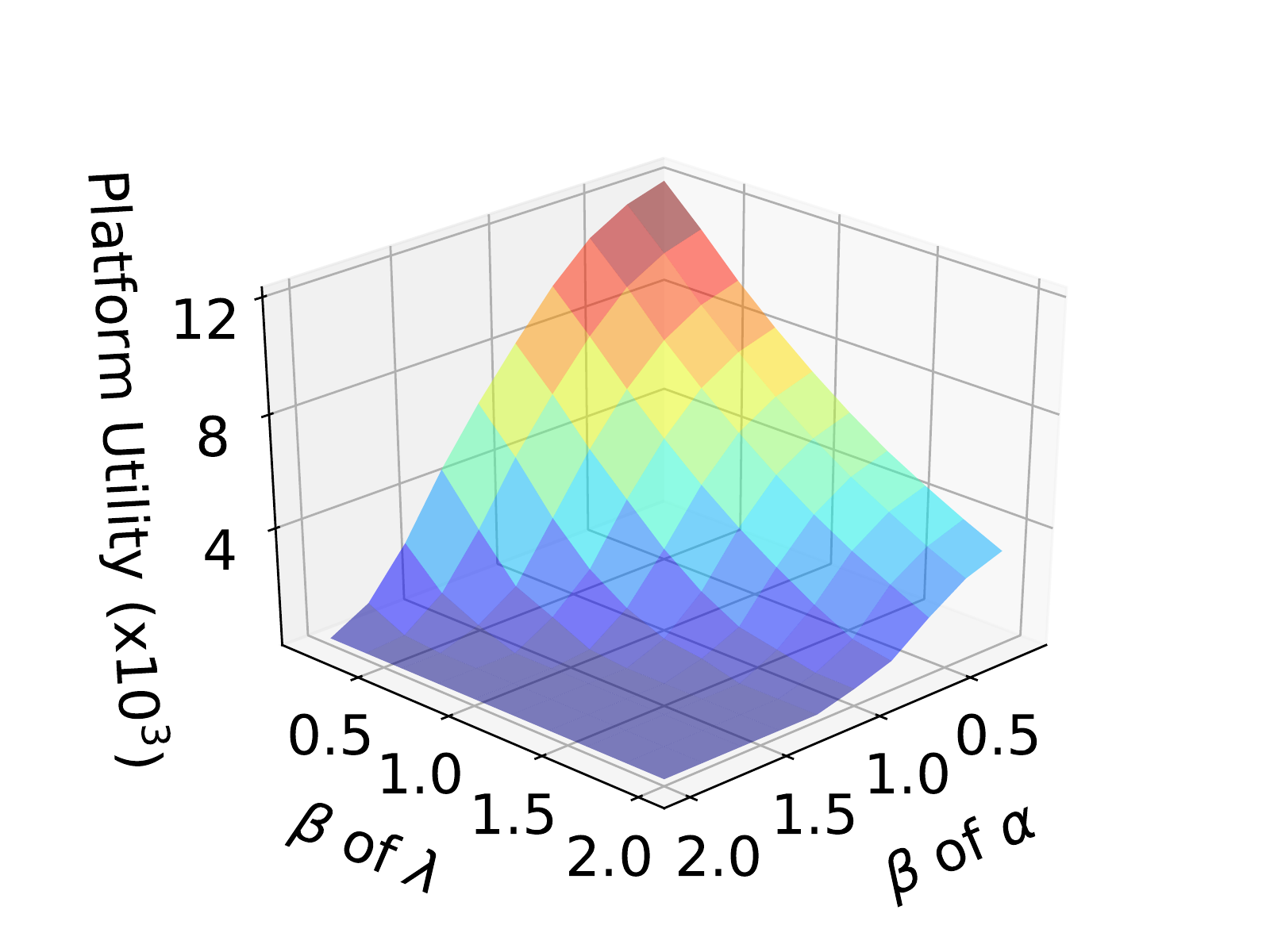}}}
	\subfloat[Average Requester Utility \label{subfig-wide:requester}]{%
		\resizebox{0.23\textwidth}{!}{\includegraphics[width=\linewidth]{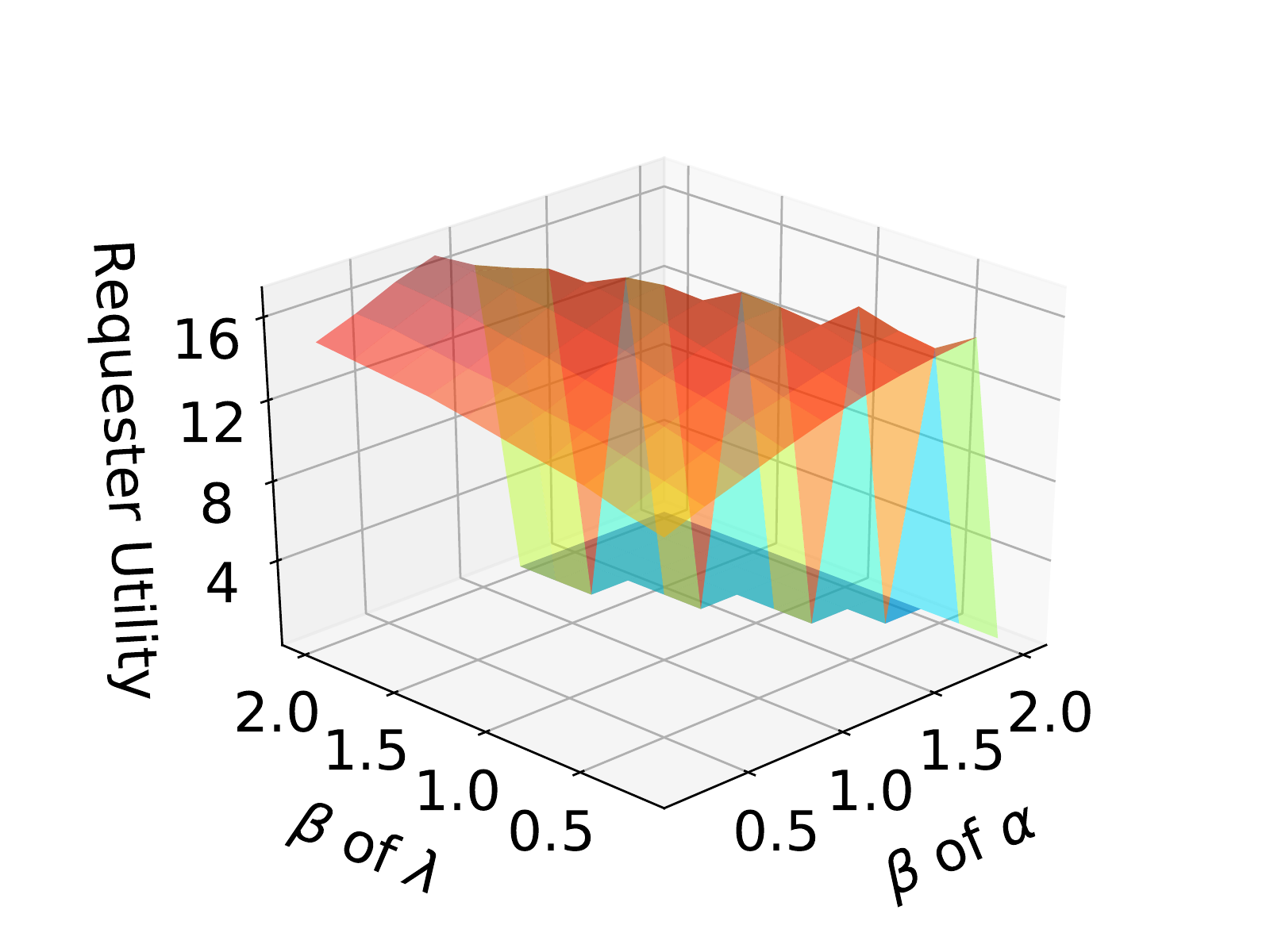}}}
	\hfill
	\subfloat[Average Worker Utility \label{subfig-wide:worker}]{%
		\resizebox{0.23\textwidth}{!}{\includegraphics[width=\linewidth]{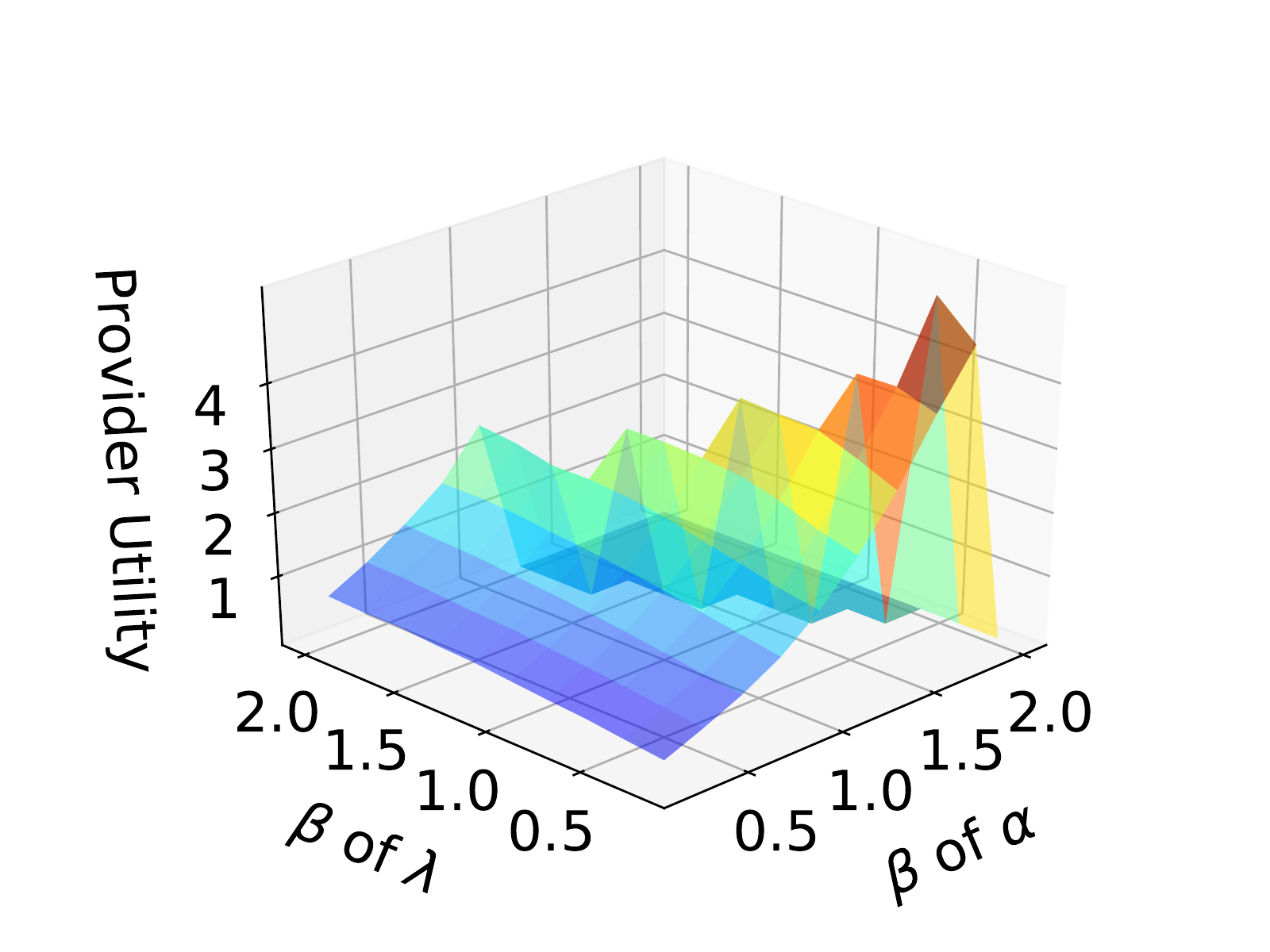}}}
	\subfloat[Social Welfare \label{subfig-wide:sw}]{%
		\resizebox{0.23\textwidth}{!}{\includegraphics[width=\linewidth]{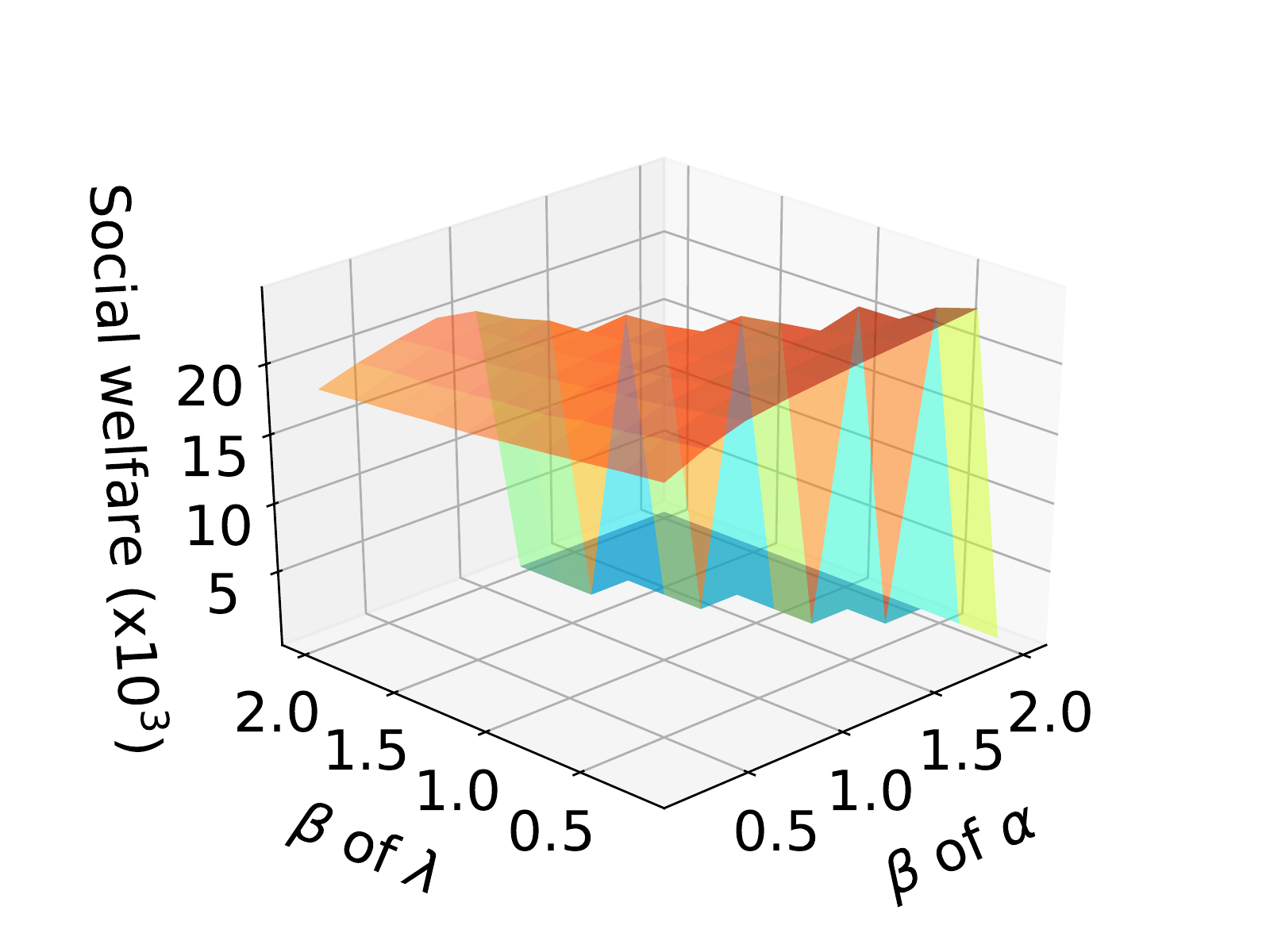}}}
	\caption{Effect of $\beta$ on Performance Metrics}
	\label{fig:9}
\end{figure*}
\begin{figure*}[!htb]
	\centering
	\subfloat[Platform Utility \label{subfig-2:platform}]{%
		\resizebox{0.23\textwidth}{!}{\includegraphics[width=\linewidth]{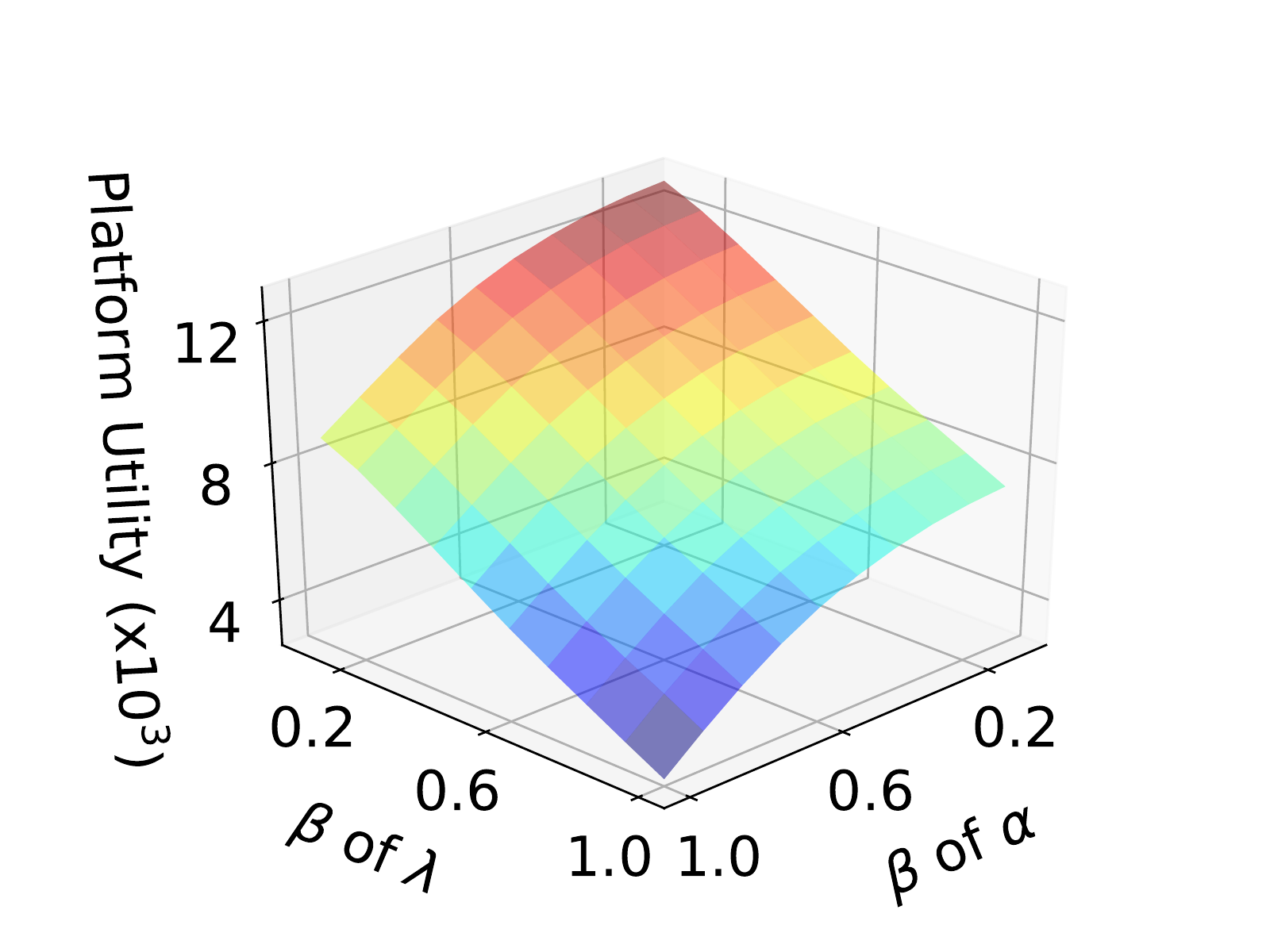}}}
	\subfloat[Average Requester Utility \label{subfig-2:requester}]{%
		\resizebox{0.23\textwidth}{!}{\includegraphics[width=\linewidth]{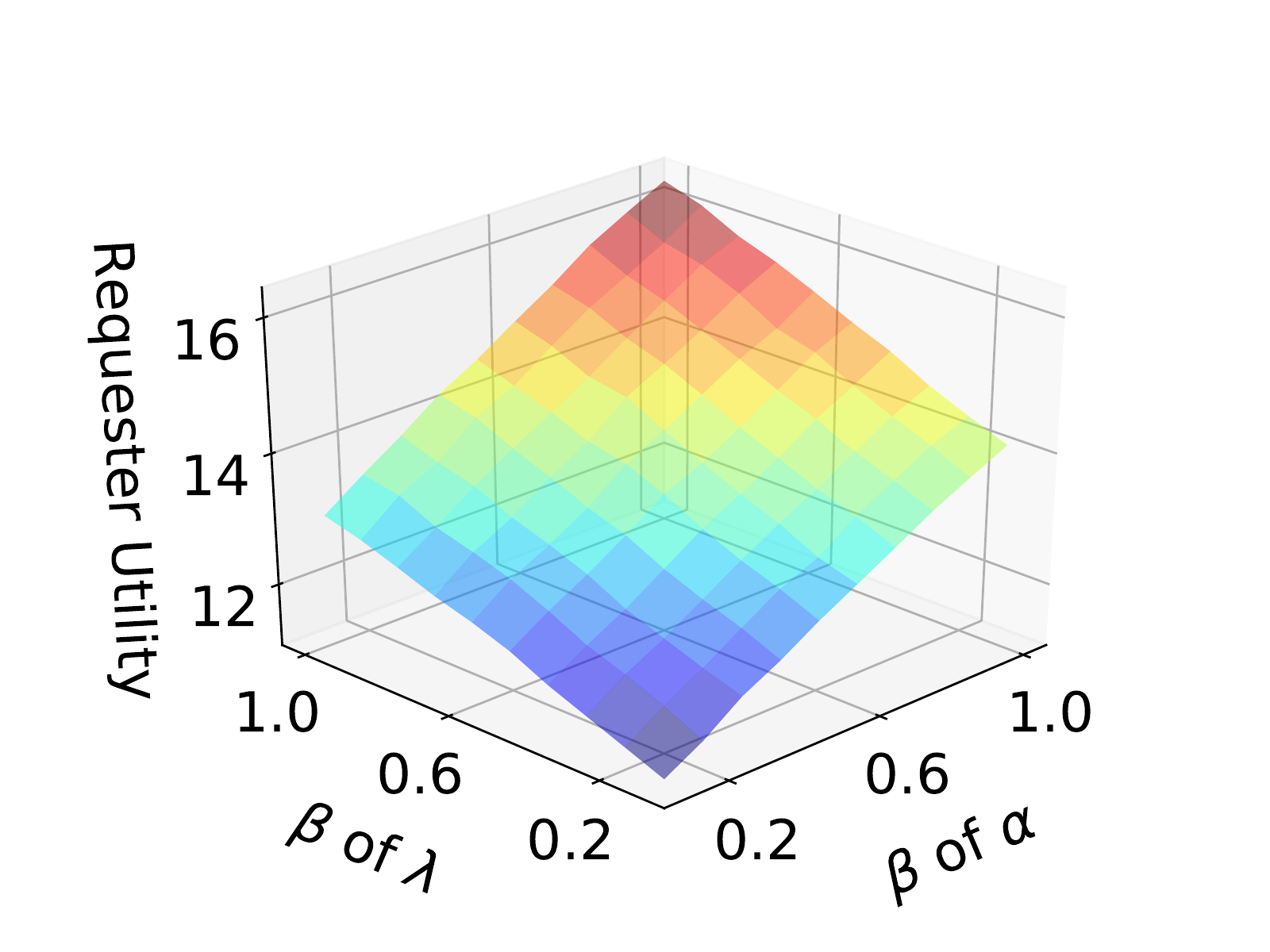}}}
	\hfill
	\subfloat[Average Worker Utility \label{subfig-2:worker}]{%
		\resizebox{0.23\textwidth}{!}{\includegraphics[width=\linewidth]{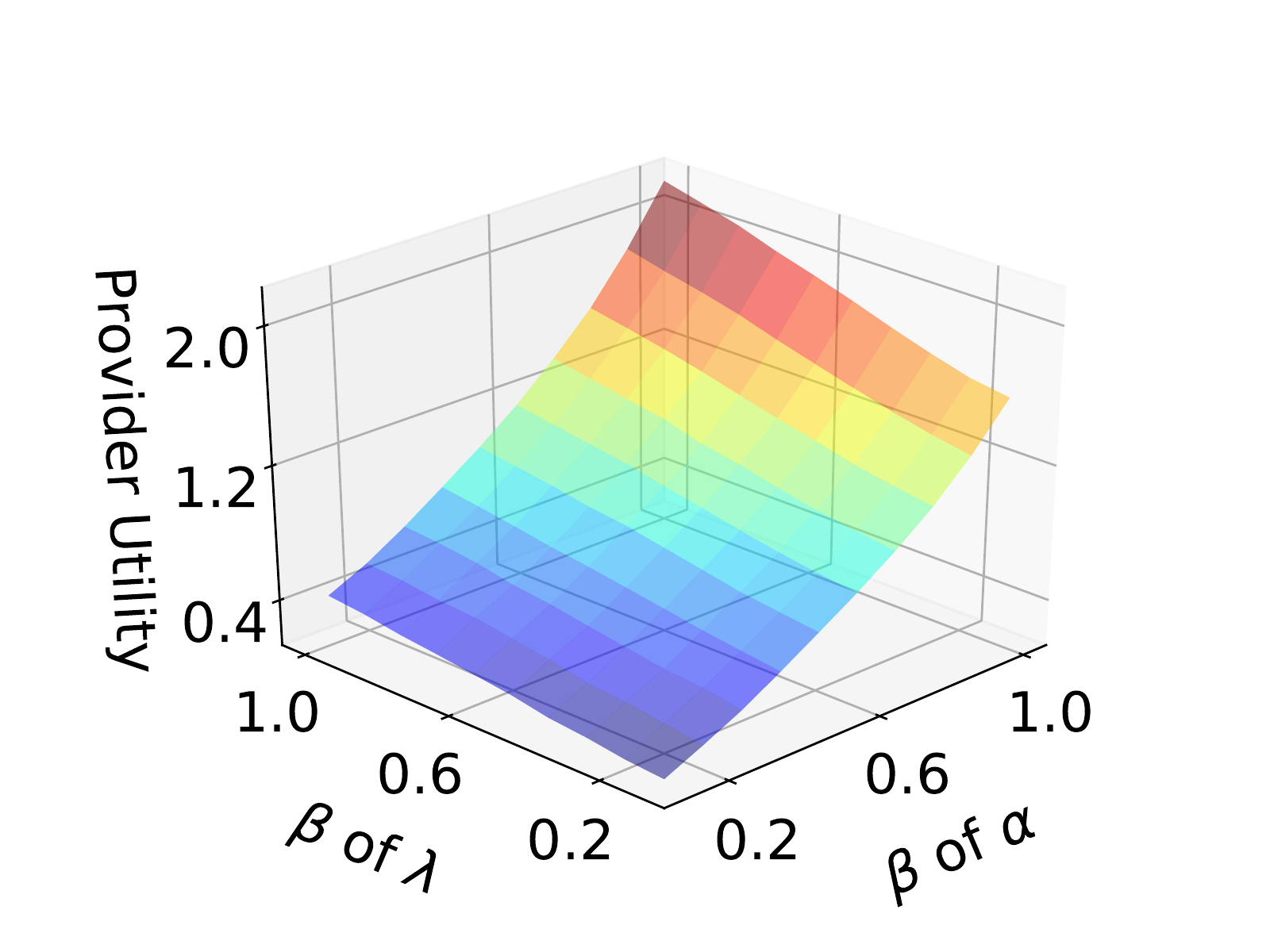}}}
	\subfloat[Social Welfare \label{subfig-2:sw}]{%
		\resizebox{0.23\textwidth}{!}{\includegraphics[width=\linewidth]{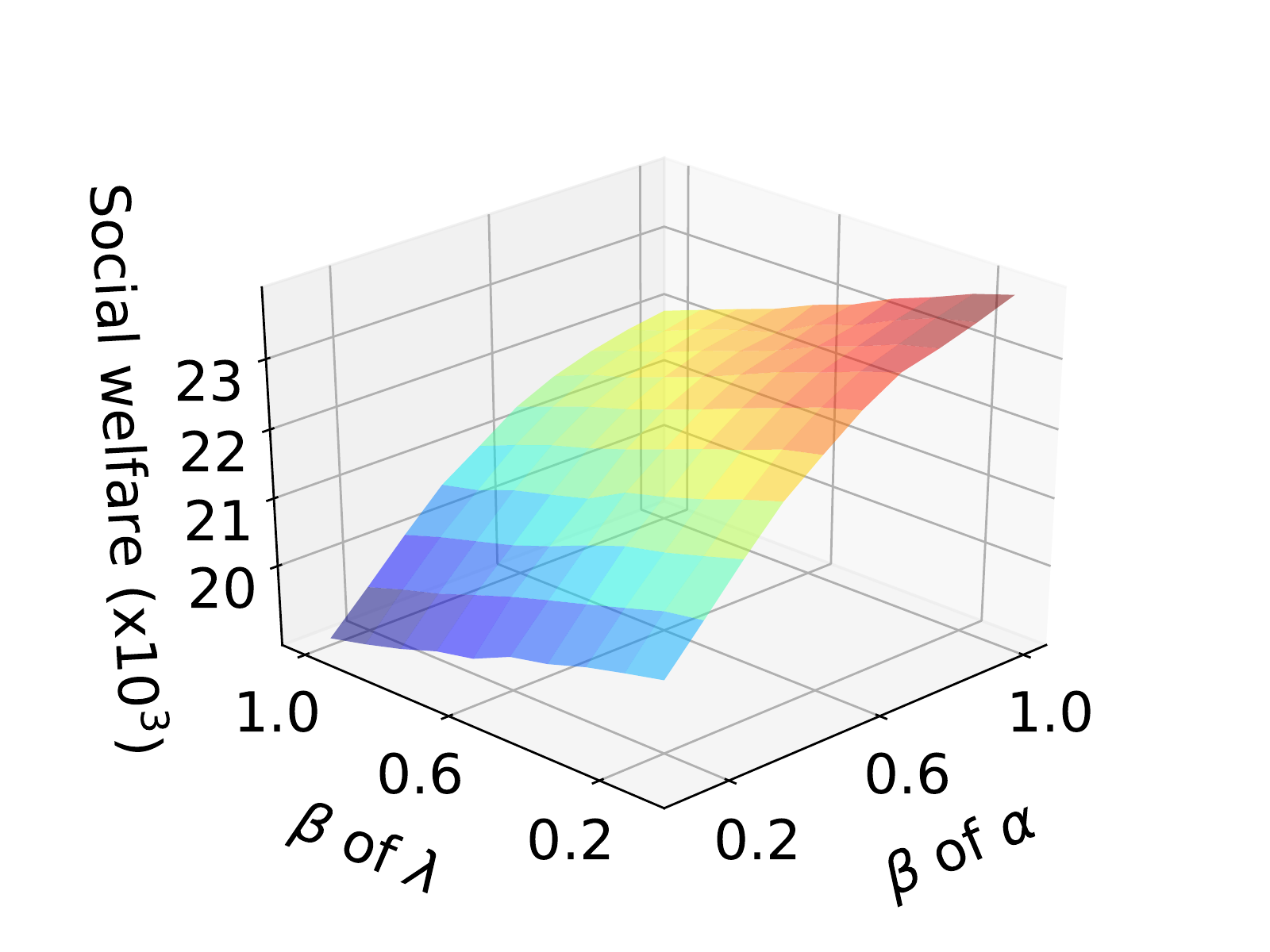}}}
	\caption{Zoomed-in Figures of Effect of $\beta$ on Performance Metrics}
	\label{fig:8}
\end{figure*}
In the previous simulations, we fixed $\beta$ which decides the weights on $\alpha_j$ for $r_j$ and $\lambda_i$ for $w_i$ in the winner selection step and the pricing step. To analyze the effect of $\beta$, we evaluate the performance metrics by vary $\beta$ over (0, 2] in an increment of 0.1. In this simulation, we set $\left|R\right|$ and $\left|W\right|$ to 1,000 and 2,000, respectively, with $K=500$.  

Fig. \ref{fig:9} shows how $\beta$ affects the four performance metrics. In Fig. \ref{subfig-wide:platform}, the ESWM mechanism achieves higher platform utility as it sets lower $\beta$ on both $\alpha$ and $\lambda$. Considering the effect of $\beta$ on $q_j$ and $p_i$ in the pricing process, such phenomenon is easily comprehensible. In the pricing process, as $\beta$ increases, the fee charged to $r_j \in R_s$ whose $\alpha_j$ is smaller than $\alpha_{th}$ decreases. Consequently, the platform utility decreases since the fees for $R_s$ which is the revenue for the platform decreases. In contrast, the reward for $w_i \in W_s$ whose $\lambda_i$ is larger than $\lambda_{th}$ increases. Since the rewards given to $W_s$ are an expenditure of the platform, its utility decreases. Interestingly, in Fig. \ref{fig:9}, when $\beta$ for both $\alpha$ and $lambda$ are set beyond 1.0, the platform utility becomes close to zero, which indicates that the platform cannot sustain its crowdsourcing service. Thus, even though appropriate $\beta$ values enable the platform to attract more participants by offering them higher average utility, a too large $\beta$ can obstruct the sustainability of the platform.  

For the same reasons in the platform utility, the average requester and worker utilities increase as $\beta$ increases as shown in Fig. \ref{subfig-wide:requester} and Fig. \ref{subfig-wide:worker}, respectively. When $v_j^{max}$ for $r_j \in R_S$ is constant, lower fees for $R_s$ lead to higher utility. With $c_i$ for $w_i \in W_s$ constant, higher reward for $W_s$ results in higher utility. As in the case of the platform utility, we can observe zero utility for both requesters and workers.       

Unlike the before-mentioned performance metrics, the expected social welfare shows a different trend. As shown in Fig. \ref{subfig-wide:sw}, while the social welfare generally increases as $\beta$ of $\alpha$ increases, it decreases as $\beta$ of $\lambda$ increases. Such inconsistency with the other performance metrics is due to the difference in the degree of importance of $\beta$ in the winning requester selection and winning worker selection processes. Compared to the role of $\beta$ of $\alpha$ in the winning requester selection, $\beta$ of $\lambda$ plays a less critical role to maximize the expected social welfare in the winning worker selection. In other words, the speed of task depreciation affects more to the expected social welfare than the workers' punctuality. In addition, setting $\beta$ of $\lambda$ large in the winning worker selection results in prioritizing workers' punctuality over their cost, which may lead to the decrease of the expected social welfare. However, considering the continuous competition with the other platforms, setting $\beta$ of $\lambda$ too small will ultimately result in the loss of both the expected social welfare and the platform utility, since $\beta$ of $\lambda$ also affect the reward for workers, which plays a critical role in attracting and retaining more workers. Thus, dynamically setting appropriate values for $\beta$ of $\alpha$ and $\beta$ is critical to the performance in the continuous competition.    

In Fig. \ref{fig:8}, we show the zoomed-in views of Fig.\ref{fig:9}, limiting $\beta$ over (0, 1] in an increment of 0.1. In each subfigure, we can more clearly observe the effect of $\beta$. 

\subsection{Proof of Desirable Economic Properties}
We prove that the ESWM mechanism satisfies the aforementioned four desirable economic properties.

\begin{lem}
	ESWM mechanism is always individually rational for all participants, except for unpunctual workers, regardless of the task submission time, $t^{sub}_i$.
\end{lem}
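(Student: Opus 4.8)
The plan is to verify non-negative utility separately for the three relevant groups of participants—unselected agents, selected requesters, and selected punctual workers—using the utility definitions in (\ref{eq:1}) and (\ref{eq:2}), the effective prices from the Pricing Algorithm \ref{PA}, and the temporary prices set in the Trimming Algorithm \ref{Trimming}. Unselected requesters and workers have utility exactly $0$ by (\ref{eq:1}) and (\ref{eq:2}), so individual rationality holds trivially for them, and the remaining work is to handle the selected agents.

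For a selected requester $r_j \in R_s$ matched with worker $w_i$ completing at $t_i^{sub}$, I would substitute the effective fee $q'_j = \frac{v_j(t_i^{sub})}{v_j^o}q_j$ into (\ref{eq:1}) to obtain $u_j^r = v_j(t_i^{sub})\left(1 - q_j/v_j^o\right)$. Since $v_j(t) \geq 0$ for all $t$ by (\ref{eq:5}), it suffices to show $q_j \leq v_j^o$. This follows from the WRSA criterion: every $r_j \in R_s$ has ratio $\frac{1}{\alpha_j^{\beta}}\frac{v_j^o}{|\Gamma_j|}$ at least that of the threshold requester $r_{th}$, which rearranges to $\frac{\alpha_j^{\beta}}{\alpha_{th}^{\beta}}\frac{v_{th}^o}{|\Gamma_{th}|}|\Gamma_j| \leq v_j^o$; the left side is exactly the temporary fee $q_j$ set in Algorithm \ref{Trimming}. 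Crucially, this argument uses only $v_j(t_i^{sub}) \geq 0$, so requester IR holds for \emph{every} submission time, punctual or not: both the achieved value and the charged fee scale by the same depreciation factor $v_j(t_i^{sub})/v_j^o$, so the sign of the utility is preserved.

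For a selected worker $w_i \in W_s$ that is \emph{punctual}, i.e. $t_i^{sub} \leq t_j^d$, equation (\ref{eq:5}) gives $v_j(t_i^{sub}) = v_j^{max} = v_j^o$, so the depreciation factor equals $1$ and the effective payment reduces to $p'_i = p_i$. Substituting into (\ref{eq:2}) yields $u_i^p = p_i - c_i$, and I would then invoke the WWSA criterion: each $w_i \in W_s$ has ratio $c_i/\lambda_i^{\beta}$ no larger than that of the threshold worker $w_{th}$, which rearranges to $c_i \leq \frac{\lambda_i^{\beta}}{\lambda_{th}^{\beta}}c_{th} = p_i$, exactly the temporary payment from Algorithm \ref{Trimming}. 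Hence $u_i^p \geq 0$.

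The one place where the argument genuinely breaks—and the reason the statement carves out an exception—is the unpunctual worker. For such a worker $v_j(t_i^{sub}) < v_j^o$, so the payment is scaled down to $p'_i = \frac{v_j(t_i^{sub})}{v_j^o}p_i < p_i$ while the cost $c_i$ is still incurred in full; since the selection rule only guarantees $p_i \geq c_i$ and not $\frac{v_j(t_i^{sub})}{v_j^o}p_i \geq c_i$, the utility $u_i^p$ can turn negative once depreciation is large enough. This asymmetry between requesters (whose reward and charge scale together) and workers (whose payment scales but cost does not) is the conceptual heart of the lemma, and I expect it to be the only subtle point; the remaining steps are direct substitutions into the selection and pricing rules.
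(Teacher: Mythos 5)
Your proposal is correct and follows essentially the same route as the paper's proof: the same case split (unselected agents, selected requesters, punctual versus unpunctual workers), the same substitution of the scaled effective prices into (\ref{eq:1}) and (\ref{eq:2}), and the same identification of the requester/worker asymmetry as the reason unpunctual workers are excluded. The only difference is that you explicitly derive $q_j \leq v_j^{max}$ and $p_i \geq c_i$ from the WRSA/WWSA threshold comparisons, which the paper simply asserts; this is a minor (and welcome) elaboration, not a different argument.
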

\begin{proof}
	To prove the individual rationality of the ESWM mechanism, we show that each participant's utility is non-negative by the end of an auction when reporting its true maximum valuation (requester) or cost (worker). \\
	\textbf{Requesters:}
	To prove the individual rationality of requesters, we show that $u_j^r \geq 0, \forall r_j \in R$. For a $r_j \in R_s$, his/her utility is $u_j^r = v_j(t_*^{sub}) - q_j$ as defined in (\ref{eq:1}). When $r_j \in R$ submits its true valuation $v_j^{max}$, the platform calculates the temporary fee, $q_j = (\alpha_j^{\beta}v^{max}_{th}\left|\Gamma_{j}\right|) / (\alpha_{th}^{\beta}\left|\Gamma_{th}\right|)  \leq v_j^{max}$ for $r_j$ in the winner selection process. Putting $q_j$ into (\ref{eq:1}), the utility of each winning requester is $u_j^r=v_j^{max}-q_j \geq 0$ before its matched worker submits the task result.  
	
	In the pricing step, the platform finally determines the fee for each winning requester as $q'_j$, based on the achieved valuation $v_j(t_i^{sub})$  by the matched worker $w_i$. Given $v_j(t_i^{sub})$  and $q'_j$, the utility of each winning requester becomes 
	\begin{equation}
	\label{eq:17}
	u_j^r =  v_j(t_i^{sub}) - q'_j =  v_j(t_i^{sub})(1 - \frac{q_j}{v_j^{max}}).
	\end{equation}
	Since $v_j(t_i^{sub})\geq 0$ as defined in (\ref{eq:5}) and $q_j \leq v_j^{max}$, $u_j^r$ is always non-negative. Therefore, the non-negative utility of each winning requester is guaranteed regardless of the task submission time. For the unselected requesters, their utility is 0 as defined in (\ref{eq:1}). Such non-negative utility proves that our incentive mechanism achieves the individual rationality of all requesters, regardless of the task submission time.\\	
	\textbf{Workers:}
	When a worker submits its true cost value, the platform calculates the temporary payment to each winning worker, $p_i =(c_{th}\lambda_i^{\beta})/\lambda_{th}^{\beta} \geq c_i$. Putting $p_i$ into (\ref{eq:2}), the utility of each winning worker before its task result submission is $u_i^p = p_i - c_i \geq 0$. 
	
	In the pricing step, the platform determines the final payment $p'_i$ to each winning worker $w_i \in W_s$ who is matched to $r_j \in R_s$, based on the achieved valuation $v_j(t_i^{sub})$. Given $v_j(t_i^{sub})$ and $p'_i$, the utility of $w_i \in W_s$ becomes
	\begin{equation}
	\label{eq:18}
	u_i^p = p'_i - c_i = \frac{v_j(t_i^{sub})}{v_j^{max}}p_i - c_i = \frac{v_j(t_i^{sub})}{v_j^{max}}(p_i - \frac{v_j^{max}}{v_j(t_i^{sub})}c_i). 
	\end{equation}
	
	Unlike the case of the winning requesters, our mechanism does not guarantee tardy workers non-negative utility. For punctual workers who meet the deadline of their requested task, i.e., $t_i^{sub} \leq t_j^d$, the non-negative utility is guaranteed since $v_j(t_i^{sub})= v_j^{max}$ if $t_i^{sub} \leq t_j^d$, which makes $u_i^p = p_i - c_i \geq 0$. 	
	In contrast, for unpunctual workers who submit their requested task result past the deadline, i.e., $t_i^{sub} > t_j^d$, the non-negative utility is not guaranteed. When $t_i^{sub} > t_j^d$, we cannot assure that $p_i - (v_j^{max}c_i)/v_j(t_i^{sub}) \geq 0$ in (\ref{eq:18}) since $v_j(t_i^{sub}) < v_j^{max}$. For the unselected workers, their utility is 0 as defined in (\ref{eq:2}). Such non-negative utility \textit{exclusively} for punctual workers can be an additional incentive to promote workers to meet the deadline.
\end{proof}

\begin{lem}
	ESWM mechanism is budget-balanced at least before the task submission.
\end{lem}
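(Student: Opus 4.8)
The plan is to argue directly from the construction of the Matching Algorithm (Algorithm \ref{Matching1}), which enforces the budget-balance condition by design rather than as a derived consequence. Recall that budget-balance requires the platform's utility to be non-negative at the close of the auction, i.e. $\sum_{r_j \in R_s} q_j - \sum_{w_i \in W_s} p_i \geq 0$, where the relevant prices \emph{before} any task result is submitted are the temporary fee $q_j$ and temporary payment $p_i$ produced by the Trimming Algorithm (Algorithm \ref{Trimming}) and collected into the sets $Q$ and $P$. I would first note that $Q$ and $P$ enumerate exactly the temporary fees over $R_s$ and the temporary payments over $W_s$, so that $\sum_{q_j \in Q} q_j$ and $\sum_{p_i \in P} p_i$ coincide with the revenue and expenditure terms appearing in the platform-utility expression.

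The core of the proof is then a two-case split on the explicit test performed by the MA (line 5), which compares $\sum_{p_i \in P} p_i$ against $\sum_{q_j \in Q} q_j$; these cases are exhaustive and mutually exclusive. In the first case, $\sum_{p_i \in P} p_i > \sum_{q_j \in Q} q_j$, the MA revokes the auction by resetting $R_s, W_s, Q, P$ to $\emptyset$ and returning (lines 6--8): no requester is charged and no worker is paid, so the platform utility is exactly $0$, which is non-negative. In the complementary case, $\sum_{p_i \in P} p_i \leq \sum_{q_j \in Q} q_j$, the algorithm proceeds to the matching loop (lines 10--12) and the platform utility before submission equals $\sum_{q_j \in Q} q_j - \sum_{p_i \in P} p_i \geq 0$. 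Combining the two cases yields a non-negative platform utility before task submission in every run, which is the claim.

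I do not expect a genuine obstacle here: the argument is essentially a reading of the revocation branch, and the only points requiring care are confirming that $Q$ and $P$ index precisely $R_s$ and $W_s$ (so the tested inequality is the platform-utility inequality) and that the branch on line 5 is exhaustive, both immediate from the pseudocode. The subtle part is rather \emph{why} the qualifier ``at least before the task submission'' is needed. After submission the effective prices scale to $q'_j = \frac{v_j(t_i^{sub})}{v_j^o} q_j$ and $p'_i = \frac{v_j(t_i^{sub})}{v_j^o} p_i$; although each matched pair is scaled by a common ratio, the ratio $v_j(t_i^{sub})/v_j^o$ varies across pairs with the realized submission times, so the weighted difference $\sum \frac{v_j(t_i^{sub})}{v_j^o}(q_j - p_i)$ can turn negative even when $\sum q_j \geq \sum p_i$ holds. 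This is precisely why the stronger post-submission statement is not asserted, and I would flag this explicitly at the end of the proof.
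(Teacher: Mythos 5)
Your proof is correct and follows essentially the same route as the paper's: a direct case split on the revocation test in line 5 of the Matching Algorithm, concluding that the pre-submission platform utility is either exactly $0$ (auction revoked) or $\sum_{q_j \in Q} q_j - \sum_{p_i \in P} p_i \geq 0$. Your closing remark on why the qualifier ``at least before the task submission'' is needed --- the per-pair scaling ratios $v_j(t_i^{sub})/v_j^o$ differ across pairs, so the scaled difference can go negative --- is a correct observation that the paper's own proof does not spell out.
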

\begin{proof}
	In Algorithm. \ref{Matching1}, the ESWM mechanism checks whether $\sum_{p_i \in P}p_i > \sum_{q_j \in Q}$. If the condition is met, the platform sets $R_s$, $W_s$, $Q$, and $P$ empty sets and returns them to terminate the crowdsourcing process. Otherwise, it continues the crowdsourcing process. Therefore, the platform utility is always non-negative, which achieves the budget-balance of our mechanism before the task submission.
\end{proof}
\begin{lem}
	ESWM mechanism is computationally efficient.
\end{lem}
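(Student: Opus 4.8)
The plan is to bound the running time of each subroutine invoked by the H-ESWM procedure (Algorithm~\ref{ESWM}) by a polynomial in $|R|$, $|W|$, and $K$, and then appeal to the closure of polynomial time under sequential composition. Since H-ESWM calls WRSA, WWSA, MA (which internally invokes TA), and PA exactly once each, in sequence, it suffices to analyze these algorithms separately and sum their costs.

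First I would analyze the two winner-selection routines. In WRSA (Algorithm~\ref{WRA}), the \texttt{while} loop executes at most $K+1$ times before terminating (or sooner, once $R$ is exhausted), and each iteration selects the maximizer of the ratio $v_j^o/(\alpha_j^{\beta}|\Gamma_j|)$ over the current set $R$, costing $\mathcal{O}(|R|)$; the subsequent threshold selection and removal add another $\mathcal{O}(K)$ scan. Hence WRSA runs in $\mathcal{O}(K|R|)$ time. By the symmetric argument applied to the ratio $c_i/\lambda_i^{\beta}$, WWSA (Algorithm~\ref{WPA}) runs in $\mathcal{O}(K|W|)$ time. After these routines we have $|R_s|,|W_s|\le K$, a fact I would use to bound the remaining steps.

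Next I would handle the matching, trimming, and pricing routines. The ordering assumptions stated at the top of TA (Algorithm~\ref{Trimming}) and MA (Algorithm~\ref{Matching1}) require sorting $R_s$ and $W_s$, which costs $\mathcal{O}(K\log K)$ since both sets have size at most $K$. Inside TA the size-equalization step is $\mathcal{O}(1)$, and the loop computing the temporary fees $q_j$ and payments $p_i$ runs over at most $\min\{|R_s|,|W_s|\}\le K$ pairs with constant work per pair, so TA is $\mathcal{O}(K\log K)$. MA then performs an $\mathcal{O}(K)$ budget-balance comparison followed by an $\mathcal{O}(K)$ matching loop, so MA together with its call to TA is $\mathcal{O}(K\log K)$. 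Finally, PA (Algorithm~\ref{PA}) iterates over the at-most-$K$ matched pairs, performing a constant number of arithmetic operations per pair to rescale each fee and payment by the factor $v_j(t_i^{sub})/v_j^o$, giving $\mathcal{O}(K)$.

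Summing these contributions, the total running time of H-ESWM is $\mathcal{O}(K|R| + K|W| + K\log K)$, which is polynomial in the input size, establishing the claim. I do not anticipate a genuine obstacle here, since every subroutine is a bounded loop over the participant sets with cheap per-iteration work. The only subtlety requiring care is to explicitly charge the cost of realizing the sorted orderings that TA and MA merely \emph{assume} as a precondition, rather than treating those orderings as free; once that bookkeeping is accounted for, the polynomial bound follows directly.
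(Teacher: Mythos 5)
Your proposal is correct and follows essentially the same route as the paper: bound each subroutine (winner selection in $\mathcal{O}(\max\{M,N\}K)$, matching and pricing in roughly $\mathcal{O}(K)$) and sum to get a polynomial total. The only difference is that you explicitly charge the $\mathcal{O}(K\log K)$ cost of the sorted orderings that TA and MA assume, a bookkeeping detail the paper's one-line proof omits but which does not change the overall bound.
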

\begin{proof}
	In the ESWM mechanism, the winner selection algorithm, the matching algorithm, and the pricing algorithm run upper-bounded by $\mathcal{O}(\max \{N, M\} K)$, $\mathcal{O}(K)$, and $\mathcal{O}(K)$ respectively. Thus, the time complexity of our incentive mechanism is bounded in $\mathcal{O}(\max\{N, M\} K)$, which achieves the computational efficiency.
\end{proof}

\begin{lem}
	ESWM mechanism is truthful.
\end{lem}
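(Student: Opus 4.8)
The plan is to establish truthfulness as dominant-strategy incentive compatibility for each side of the double auction separately, invoking the standard characterization for single-parameter agents: an allocation rule paired with critical-value pricing is truthful whenever the allocation is monotone in the reported parameter. Each requester's only strategic report is its true maximum valuation $v_j^{max}$ and each worker's is its true cost $c_i$, so both sides fall squarely into the single-parameter setting, and the two cases can be argued in parallel.

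First I would treat the requesters. The WRSA ranks requesters by $v_j^{max}/(\alpha_j^{\beta}\left|\Gamma_j\right|)$ and admits the top $K$, so with $\alpha_j$ and $\left|\Gamma_j\right|$ held fixed the allocation is monotone non-decreasing in the reported $v_j^{max}$: a selected requester stays selected under any higher report. I would then show that the temporary fee computed in the TA, $q_j = (\alpha_j^{\beta}/\alpha_{th}^{\beta})(v^{max}_{th}/\left|\Gamma_{th}\right|)\left|\Gamma_j\right|$, is exactly the \emph{critical value} for $r_j$, i.e. the infimum of reports that keep its ratio above the threshold requester's ratio, hence the smallest valuation for which $r_j$ is admitted. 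Crucially $q_j$ depends only on the threshold requester $r_{th}$ (fixed by the \emph{other} bids) and on $r_j$'s own fixed $\alpha_j,\left|\Gamma_j\right|$, never on $r_j$'s reported $v_j^{max}$. A short case analysis then closes the argument: a winner who over-reports stays a winner and is charged the same critical value (no gain); a winner who under-reports either keeps the same fee or drops to utility $0$, which by the individual rationality shown above cannot exceed its truthful payoff; and a loser who under-reports to win is charged $q_j > v_j^{max}$, yielding negative utility.

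The worker side is symmetric and I would argue it the same way. The WWSA ranks workers by $c_i/\lambda_i^{\beta}$ and admits the smallest $K$, so the allocation is monotone non-increasing in the reported cost, and the temporary payment $p_i = (c_{th}/\lambda_{th}^{\beta})\lambda_i^{\beta}$ is the critical value, namely the largest cost $w_i$ could report and still be selected, fixed by the threshold worker $w_{th}$ and by $w_i$'s own $\lambda_i$ but independent of the reported $c_i$. The same four comparisons (winner or loser, over- or under-report) show no worker profits from misreporting, with under-reporting to win forcing $p_i < c_i$ and hence a negative payoff.

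The step I expect to be the real obstacle is reconciling this clean critical-value picture with the \emph{effective} post-submission pricing, where $q'_j$ and $p'_i$ rescale $q_j$ and $p_i$ by the factor $v_j(t_i^{sub})/v_j^{o}\in[0,1]$. The point to make precise is that this factor is driven by the matched worker's stochastic submission time $t_i^{sub}$ rather than by the strategic report: because the critical value $q_j$ (respectively $p_i$) does not depend on the agent's own report, multiplying it by a common nonnegative factor preserves the sign of the winner-versus-loser comparison and therefore the dominance of truthful reporting. I would make this rigorous by writing the realized requester utility as $v_j(t_i^{sub})\bigl(1-q_j/v_j^{max}\bigr)$ and the realized worker utility as in (\ref{eq:18}), and checking that each is monotone in the correct direction in the own report while the rescaling factor is held by the environment, so the optimum is attained at the truthful report. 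The delicate verification is confirming that a requester cannot manipulate the denominator $v_j^{o}$ to shrink its effective fee; this is the one place the argument departs from the textbook threshold-auction template and must be checked directly.
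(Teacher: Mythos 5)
Your proof takes the same route as the paper's: Myerson's characterization, monotonicity of the greedy rankings by $v_j^{max}/(\alpha_j^{\beta}\left|\Gamma_j\right|)$ and $c_i/\lambda_i^{\beta}$, and identification of the threshold-based $q_j$ and $p_i$ as the critical values, handled separately for the two sides of the double auction. Two remarks. First, a small slip: a \emph{losing requester} must \emph{over}-report (not under-report) its valuation to force its way into $R_s$; the conclusion you draw for that case (it is then charged $q_j > v_j^{max}$ and receives negative utility) is the correct one, so only the word is wrong. Second, the concern you raise in your final paragraph about the effective prices $q'_j=\frac{v_j(t_i^{sub})}{v_j^{o}}q_j$ and $p'_i=\frac{v_j(t_i^{sub})}{v_j^{o}}p_i$ is genuine and is \emph{not} addressed by the paper: the published proof establishes monotonicity and critical-value pricing only for the temporary prices and never checks that the post-submission rescaling in the Pricing Algorithm preserves dominant-strategy truthfulness. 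Since $v_j(t)$ in (\ref{eq:5}) is computed from the \emph{reported} valuation, a misreport changes both the numerator and the denominator of the scaling factor, so the realized utility under a report $\hat{v}_j^{max}$ must be written out explicitly and shown to be maximized at $\hat{v}_j^{max}=v_j^{max}$; your sketch correctly isolates this as the delicate step but leaves it unverified, exactly as the paper does. In that sense your outline is at least as complete as the published proof and more candid about where the remaining work lies.
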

\begin{proof}
	To prove the truthfulness of our incentive mechanism, we use Myerson's Theorem \cite{myerson}. According to \cite{myerson}, we need to show that our mechanism satisfies two conditions, \textit{monotonicity} of the winner selection process and \textit{critical value}-based pricing to the winners to prove its truthfulness.
	
	A monotonicity of mechanisms is satisfied if a seller $w_i$ (a buyer $r_j$) wins the auction by bidding $c_i$ ($v_j^{max}$), it will surely win the auction by bidding ${c'}_i \leq c_i$ (${v'}^{max}_j \geq v_j^{max}$). 
	The critical value is the maximum (minimum) value that a seller (buyer) can ask to win the auction. In other words, if a seller (buyer) bids higher (lower) value than the critical value, it will lose the auction. As our incentive mechanism is in a form of double auction, we prove truthfulness by showing the truthfulness of requesters and workers, respectively. \\
	\textbf{Requesters:} We prove the truthfulness of requesters by showing that the WRSA satisfies monotonicity and the temporary fee $q_j$ is the critical value. 
	In the WRSA, monotonicity is evident. If a requester $r_j$  wins the auction by bidding $v_j^{max}$, it means that we have a threshold requester $r_{th}$, such that $v_j^{max}/(\alpha_j^{\beta}\left|\Gamma_j\right|) \geq v^{max}_{th}(\alpha_{th}^{\beta}\left|\Gamma_{th} \right|)$. 
	Then, if the requester $r_j$  bids ${v'}^{max}_j$ such that ${v'}^{max}_j \geq v_j^{max}$, it will definitely win the auction as ${v'}^{max}_j/(\alpha_j^{\beta}\left|\Gamma_j\right|) \geq v_j^{max}/(\alpha_j^{\beta}\left|\Gamma_j\right|) \geq v^{max}_{th}/(\alpha_{th}^{\beta} \left|\Gamma_{th} \right|)$. Therefore, the WRSA satisfies monotonicity in the winner selection.
	
	To verify the critical value based pricing, we show that if a requester $r_j$ submits ${v"}^{max}_j$ which is less than the temporary fee $q_j$, then the requester will lose the auction. In the WRSA, if a requester $r_j$ bids ${v"}^{max}_j$ such that ${v"}^{max}_j < q_j$, then it will be ousted from $R_s$ as ${v"}^{max}_j/(\alpha_j^{\beta}\left|\Gamma_{j}\right|) < q_j/(\alpha_j^{\beta}\left|\Gamma_{j}\right|)= v^{max}_{th}/(\alpha_{th}^{\beta} \left|\Gamma_{th}\right|)$. Therefore, $q_j$ is the critical value for $r_j$. Since both monotonicity of the winner selection and critical value-based pricing to the winners are satisfied, the WRSA achieves truthfulness. 
	\\
	\textbf{Workers:} As in the case of requesters, we prove the truthfulness of workers by showing that the WWSA satisfies monotonicity and $p_i$ for $w_i \in W_s$ is the critical value. If a worker $w_i$ wins the auction, it means that we have a threshold worker, such that $c_i/\lambda_i^{\beta} \leq c_{th}/\lambda_{th}^{\beta}$. Then, if the worker $w_i$ asks for $c'_i$ such that $c_i' \leq c_i$, it will surely win the auction as $c_i'/\lambda_i^{\beta} \leq c_i/\lambda_i^{\beta} \leq c_{th}/\lambda_{th}^{\beta}$. This shows that the WWSA satisfies monotonicity in the winner selection.
	
	To verify the critical value based pricing, we show that if $w_i$ submits ${c"}_i$ which is larger than $p_i$, then the worker will lose the auction. In the WWSA, if a worker $w_i$ submits ${c"}_i$ such that ${c"}_i > p_i$, then it will be excluded from $W_s$ because ${c"}_i/\lambda_i^{\beta} > p_i/\lambda_i^{\beta}= c_{th}/\lambda_{th}^{\beta}$. Therefore, $p_i$ is the critical value for $w_i$. Since both monotonicity of the winner selection and critical value-based pricing to the winners are satisfied, the WWSA achieves truthfulness.
\end{proof}

\section{Conclusion}
\label{part:6}
In this work, we addressed the impractical assumption in the existing works that all the workers meet the deadline and task valuation remains constant. To bridge the gap of such imperfect punctuality and the consequential potential task depreciation in crowdsourcing systems, we modeled the workers' heterogeneous punctuality behavior and task depreciation. To address the limitations of the existing works, we proposed an expected social welfare maximizing (ESWM) mechanism that selects appropriate requester-worker pairs considering heterogeneity in task depreciation speed and workers' punctuality. Different from the existing works, the ESWM mechanism aims to achieve higher social welfare in a long-term for multiple rounds of auction by attracting and retaining more participants, rather than attempting to simply maximize the social welfare for one round. In the evaluation, we compared the ESWM mechanism to one of the existing works in both short-term and long-term scenarios to reflect the continuous competition between crowdsourcing service platforms. Simulation results show that the ESWM mechanism achieves higher social welfare and platform utility than the benchmark. Moreover, we proved that the ESWM mechanism achieves the desirable economic properties.

\section*{Acknowledgment}
This research was funded by the MSIT, Korea, under the "ICT Consilience Creative Program" (IITP-2017-R0346-16-1007) supervised by the IITP, by the KEIT, Korea, under the "Global Advanced Technology Center" (10053204), and by the NRF, Korea, under the "Basic Science Research Program" (NRF-2015R1C1A1A01053788). 
 
\bibliographystyle{IEEEtran}
\bibliography{RPE}
 
\begin{IEEEbiography}[{\includegraphics[width=1in,height=1.25in,clip,keepaspectratio]{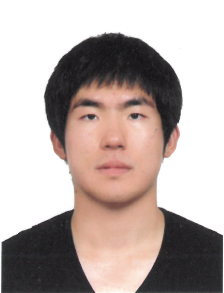}}]{Duin Baek} 
    received his B.Sc. degree from Yonsei University, Seoul in electrical and electronics engineering. He is currently a Ph.D. candidate in the Department of Computer Science at Stony Brook University and the State University of New York Korea. His current research includes designing mechanisms for crowdsourcing service, distributed computing via IoT systems, and computer vision for VR/AR.
\end{IEEEbiography}

\begin{IEEEbiography}[{\includegraphics[width=1in,height=1.25in,clip,keepaspectratio]{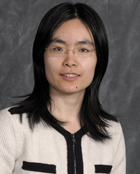}}]{Jing Chen} 
    is an Assistant Professor in the Department of Computer Science at Stony Brook University. She is an Affiliated Assistant Professor in the Department of Economics and an affiliated member of the Center for Game Theory. Before joining Stony Brook, she was a postdoctoral fellow in the School of Mathematics at the Institute for Advanced Study, Princeton. She received her Ph.D. in Computer Science from Massachusetts Institute of Technology in 2012, M.E. and B.E. in Computer Science from Tsinghua University, China. Her main research interests are computational game theory, mechanism design, auctions, markets, and healthcare. She is also interested in algorithms and computational complexity.
\end{IEEEbiography}

\begin{IEEEbiography}[{\includegraphics[width=1in,height=1.25in,clip,keepaspectratio]{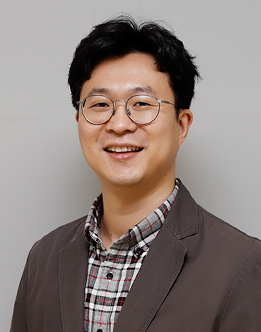}}]{Bong Jun Choi} 
    received his B.Sc. and M.Sc. degrees from Yonsei University, Korea, both in electrical and electronics engineering, and the Ph.D. degree from University of Waterloo, Canada, in electrical and computer engineering. He is an assistant professor at the Department of Computer Science, State University of New York Korea, Korea, and concurrently a research assistant professor at the Department of Computer Science, Stony Brook University, USA. His research focuses are energy efficient networks, distributed mobile wireless networks, smart grid communications, and network security.
\end{IEEEbiography}




\end{document}